\newcommand{\hyltl}{\textsf{HyLTL}\xspace}
\newcommand{\hyltlp}{\textsf{HyLTL$^+$}\xspace}
\newcommand{\ltl}{\textsf{LTL}\xspace}
\newcommand{\X}{\ensuremath{\mathbin{\mathbf{X}}}\xspace}
\newcommand{\U}{\ensuremath{\mathbin{\mathbf{U}}}\xspace}
\newcommand{\R}{\ensuremath{\mathbin{\mathbf{R}}}\xspace}
\newcommand{\F}{\ensuremath{\mathbin{\mathbf{F}}}\xspace}
\newcommand{\G}{\ensuremath{\mathbin{\mathbf{G}}}\xspace}
\newcommand{\dX}{\ensuremath{\mathbin{\mathfrak{X}}}\xspace}
\newcommand{\dU}{\ensuremath{\mathbin{\mathfrak{U}}}\xspace}
\newcommand{\dR}{\ensuremath{\mathbin{\mathfrak{R}}}\xspace}
\newcommand{\fcs}{\ensuremath{FC}\xspace}
\newcommand{\mmodels}{\Vdash}
\newcommand{\cmodels}{\vdash}
\newcommand{\ap}{\ensuremath{AP}\xspace}
\newcommand{\on}{\text{\it on}}
\newcommand{\off}{\text{\it off}}
\newcommand{\op}{\ensuremath{OP}\xspace}
\newcommand{\pre}[1]{\tilde{#1}}
\newtheorem{definition}{Definition}
\newtheorem{theorem}{Theorem}
\newtheorem{lemma}{Lemma}
\newcommand{\dims}[1]{k}
\DeclareMathOperator{\dom}{\mathrm{dom}}
\newcommand{\bbN}{\mathbb{N}}
\newcommand{\bbR}{\mathbb{R}}
\newcommand{\cvF}{\mathcal{F}}
\newcommand{\cvL}{\mathcal{L}}
\newcommand{\cvV}{\mathcal{V}}
\newcommand{\autA}{\mathcal{A}}
\newcommand{\autH}{\mathcal{H}}
\newcommand{\Loc}{\mathrm{Loc}}
\newcommand{\Edg}{\mathrm{Edg}}
\DeclareMathOperator{\Dyn}{{Dyn}}
\DeclareMathOperator{\Rst}{{Res}}
\DeclareMathOperator{\Init}{{Init}}
\DeclareMathOperator{\Val}{{Val}}
\newcommand{\fstate}{\mathit{fstate}}
\newcommand{\lstate}{\mathit{lstate}}
\newcommand{\ftime}{\mathit{ftime}}
\newcommand{\ltime}{\mathit{ltime}}
\newcommand{\bb}{\mathbf{b}}
\newcommand{\bx}{\mathbf{x}}
\newcommand{\trans}[1]{\xrightarrow{#1}}
\DeclareMathOperator{\trajs}{\mathit{Trajs}}
\newcommand{\rest}[1]{{\downarrow}_{#1}}
\newcommand{\cut}[1]{ }
\renewcommand{\quote}[1]{\emph{``#1''}}
\title{Improving \hyltl model checking of hybrid systems}
\author{Davide Bresolin
\institute{University of Verona (Italy)
\email{davide.bresolin@univr.it}}
}
\begin{document}

\maketitle

\begin{abstract}
The problem of model-checking hybrid systems is a long-time challenge in the scientific community. Most of the existing approaches and tools are either limited on the properties that they can verify, or restricted to simplified classes of systems.
To overcome those limitations, a temporal logic called \hyltl has been recently proposed. The model checking problem for this logic has been solved by translating the formula into an equivalent hybrid automaton, that can be analized using existing tools.
The original construction employs a declarative procedure that generates exponentially many states upfront, and can be very inefficient when complex formulas are involved.
In this paper we solve a technical issue in the construction that was not considered in previous works, and propose a new algorithm to translate \hyltl into hybrid automata, that exploits optimized techniques coming from the discrete \ltl community to build smaller automata.
\end{abstract}

%\begin{IEEEkeywords}
%hybrid systems; temporal logic; model checking
%\end{IEEEkeywords}

\section{Introduction}

\emph{Hybrid systems} are heterogeneous systems characterized by a tight interaction between discrete and continuous components. Typical examples include discrete controllers that operate in a continuous environment, as in the case of manufacturing plants, robotic systems, and cyberphysical embedded systems. 
Because of their heterogeneous nature, hybrid systems cannot be faithfully modeled by discrete only nor by continuous only formalisms. In order to model and specify them in a formal way, the notion of \emph{hybrid automata} has been introduced~\cite{Alur,maler91from}. Intuitively, a hybrid automaton is a ``finite-state automaton'' with continuous variables that evolve according to dynamics characterizing each discrete state (called a {\em location} or {\em mode}). 
Of particular importance in the analysis of hybrid automata is the \emph{model checking problem}, that is, the problem of verifying whether a given hybrid automaton respects some property of interest. Unfortunately, the model checking problem is computationally very difficult. Indeed, even for simple properties and systems, this problem is not decidable~\cite{henzinger98whats}.

%Many different approaches have been used in the literature to tackle this problems.
For very simple classes of hybrid systems, like timed automata, the model checking problem can be solved exactly~\cite{timed_automata}. Tools like Kronos \cite{kronos97} and \textsc{UPPAAL}~\cite{uppaal} can be used to verify properties of timed automata. For more complex classes of systems, the problem became undecidable, and many different approximation techniques may be used to obtain an answer, at least in some cases.
Tools like PhaVer \cite{Frehse2008} and SpaceEx~\cite{Frehse2011}  can compute approximations of the reachable set of hybrid automata with linear dynamics, and thus can be used to verify safety properties.
Other tools, like HSOLVER~\cite{hsolver}, and Ariadne~\cite{ijrnc2012}, can manage systems with nonlinear dynamics, but are still limited to safety properties.

We are aware of only very few approaches that can specify and verify complex properties of hybrid systems in a systematic way. A first attempt was made in~\cite{Lamport93}, where an extension of the Temporal Logic of Actions called TLA+ is used to specify and implement the well-known gas burner example. Later on, Signal Temporal Logic (STL), an extension of the well-known Metric Interval Logic to hybrid traces, has been introduced to monitor hybrid and continuous systems~\cite{Maler2004}. More recent approaches include the tool KeYmaera~\cite{Platzer2008}, that uses automated theorem proving techniques to verify nonlinear hybrid systems symbolically, and the logic HRELTL~\cite{Cimatti09}, that is supported by an extension of the discrete model checker NuSMV, but it is limited to systems with linear dynamics. 

To overcome the limitations of the current technologies, an automata-theoretic approach for model checking hybrid systems has been recently proposed~\cite{has2013}. The work is based on an extension of the well-known temporal logic \ltl to hybrid traces called \hyltl. The model checking problem for this logic has been solved by translating the formula into an equivalent hybrid automaton, reducing the model checking problem to a reachability problem that can be solved by existing tools. The original construction employs a declarative procedure that generates exponentially many states upfront, and can be very inefficient when complex formulas are involved.

In this paper we solve a technical issue in the construction that was not considered in previous works by identifying the precise fragment of \hyltl that can be translated into hybrid automata, and
we propose a new algorithm to translate formulas into hybrid automata, that exploits optimized techniques coming from the discrete \ltl community to be more efficient than the original declarative approach.

\section{Preliminaries}

Before formally defining hybrid automata and the syntax and semantics of \hyltl we need to introduce some basic terminology.
Throughout the paper we fix the \emph{time axis} to be the set of non-negative real numbers $\bbR^+$. An \emph{interval} $I$ is any convex subset of $\bbR^+$, usually denoted as $[t_1, t_2] = \{t \in \bbR^+ : t_1 \leq t \leq t_2\}$. 
%For any interval $I$ and $t \in \bbR^+$, we define $I + t$ as the interval $\{t' + t : t' \in I\}$.
%
We also fix a countable universal set $\cvV$ of \emph{variables}, ranging over the reals. Given a finite set of variables $X \subseteq \cvV$, a \emph{valuation} over $X$ is a function $\bx: X \mapsto \bbR^n$ that associates a value to every variable in $X$. The set $\Val(X)$ is the set of all valuations over $X$. 
%Given a valuation $\bx$ and a subset of variables $Y \subseteq X$, we denote the \emph{restriction} of $\bx$ to $Y$ as $\bx\rest Y$. 
%Symmetrically, given a valuation $\bx$ and a superset of variables $Z \supseteq X$, the \emph{extension} of $\bx$ to $Z$ is denoted as $\bx \extd Z$ and defined as the set $\{\bz \in \Val(Z): \bz\rest X = \bx\}$. 
%The restriction operator is extended to sets of valuations in the usual way. 
%A valuation $\bx$ over $X$ and a valuation $\by$ over $Y$ \emph{agree} when they assign the same value to common variables, i.e., $\bx\rest X\cap Y = \by \rest X \cap Y$. When valuations $\bx$ over $X$ and $\by$ over $Y$ agree, we denote by $\bx\valunion\by$ the \emph{union} of $\bx$ and $\by$, defined as the valuation $\bz$ such that $\bz\rest X = \bx$ and $\bz\rest Y = \by$. Notice that valuations over disjoint sets of variables always agree, and thus their union is always defined.

\medskip

A notion that will play an important role in the paper is the one of \emph{trajectory}. A trajectory over a set of variables $X$ is a function $\tau: I \mapsto \Val(X)$, where $I$ is a left-closed interval with left endpoint equal to $0$. We assume trajectories to be differentiable almost everywhere on the domain, and  we denote with $\dot{\tau}$ the corresponding (partial) function giving the value of the derivative of $\tau$ for every point in the interior of $I$ where $\tau$ is differentiable (note that $\dot\tau$ might not be differentiable neither continuous). With $\dom(\tau)$ we denote the domain of $\tau$, while with $\tau.\ltime$ (the \emph{limit time} of $\tau$) we define the supremum of $\dom(\tau)$. The \emph{first state} of a trajectory is $\tau.\fstate = \tau(0)$, while, when $\dom(\tau)$ is right-closed, the \emph{last state} of a trajectory is defined as $\tau.\lstate = \tau(\tau.\ltime)$. We denote with $\trajs(X)$ the set of all trajectories over $X$. 
%Given a subset $Y \subseteq X$, the \emph{restriction} of $\tau$ to $Y$ is denoted as $\tau\rest Y$ and it is defined as the trajectory $\tau' : \dom(\tau) \mapsto \Val(Y)$ such that $\tau'(t) = \tau(t)\rest Y$ for every $t \in \dom(\tau)$. 
If $[t,t']$ is a subinterval of $\dom(\tau)$, we denote whith $\tau\rest{[t,t']}$ the trajectory $\tau'$ such that $\dom(\tau') = [0,t'-t]$ and $\tau'(t'') = \tau(t''+t)$ for every $t'' \in \dom(\tau')$. 
%The restriction operators are extended to sets of trajectories in the usual way.
%
%A trajectory $\tau'$ is a \emph{prefix} of another trajectory $\tau$ if and only if $\tau'.\ltime \leq \tau.\ltime$ and $\tau'(t) = \tau(t)$ for every $t \in \dom(\tau')$. Conversely, we say that $\tau'$ is a \emph{suffix} of $\tau$ if there exists $t \in \bbR^+$ such that $\tau'.\ltime = \tau.\ltime - t$ and $\tau'(t') = \tau(t'+t)$ for every $t' \in \dom(\tau')$. 
Given two trajectories $\tau_1$ and $\tau_2$ such that $\tau_1.\ltime < + \infty$, their concatenation $\tau_1 \cdot \tau_2$ is the trajectory with domain $[0, \tau_1.\ltime + \tau_2.\ltime]$ such that $\tau_1 \cdot \tau_2(t) = \tau_1(t)$ if $t \in \dom(\tau_1)$,  $\tau_1 \cdot \tau_2(t) = \tau_2(t-\tau_1.\ltime)$ otherwise. 
%We extend the concatenation operation to countable sequences of trajectories in the usual way.

\medskip

Variables will be used in the paper to build \emph{constraints}: conditions on the value of variables and on their derivative that can define sets of valuations, sets of trajectories, and jump relations. Formally, given a set of variables $X$, and a set of mathematical operators $\op$ (e.g. $+$, $-$, $\cdot$, exponentiation, $\sin$, $\cos$, \dots), we define the corresponded set of \emph{dotted variables} $\dot{X}$ as $\{\dot{x} | x \in X\}$ and the set of \emph{tilde variables} $\pre{X}$ as $\{\pre{x} | x \in X\}$. We use $\op$, $X$, $\dot{X}$ and $\pre{X}$ to define the following two classes of constraints.

\begin{itemize}
%	\item \emph{Simple constraints}: expressions built up from variables in $X$, constants from $\bbR$, mathematical operators from $\op$ and the usual equality and inequality relations ($\leq$, $=$, $>$, \dots). Examples of flow constraints are $x + y \geq 0$, $x = 4 y + z$, $\sin(x) + \cos(y) = 1$.

	\item \emph{Jump constraints}: expressions built up from variables in $X \cup \pre{X}$, constants from $\bbR$, mathematical operators from $\op$ and the usual equality and inequality relations ($\leq$, $=$, $>$, \dots). Examples of jump constraints are $x = 4 \pre{y} + \pre{z}$, $x^2 \leq \pre{y}$, $\pre{y} > \cos(y)$.

	\item \emph{Flow constraints}: expressions built up from variables in $X \cup \dot{X}$, constants from $\bbR$, mathematical operators from $\op$ and the usual equality and inequality relations ($\leq$, $=$, $>$, \dots). Examples of flow constraints are $\dot{x} = 4 y + z$, $\dot{x} + y \geq 0$, $\sin(x) > \cos(\dot{y})$.
\end{itemize}

%\noindent Notice that a simple constraint is both a flow and a jump constraint. 
%In addition to the above, we assume also to have the special constraints $\top$ and $\bot$, representing respectively the ``always true'' and ``always false'' constraint.

%Simple constraints will give conditions on the valuations for the variables in $X$. We say that a valuation $\bx$ respects a simple constraint $c$ (denoted $\bx \cmodels c$) if $\bx$ is a solution for the constraint.

\noindent We use jump constraints to give conditions on pairs of valuations $(\pre{\bx}, \bx)$. Given a jump constraint $c$, we say that $(\pre{\bx}, \bx)$ respects $c$, and we denote it with $(\pre{\bx}, \bx) \cmodels c$, when, by replacing every variable $x$ with its value in $\bx$ and every tilde variable $\pre{x}$ with the value of the corresponding normal variable in $\pre{\bx}$ we obtain a solution for $c$.
Flow constraints will be used to give conditions on trajectories. Given a flow constraint $c$, we say that a trajectory $\tau$ respects $c$, and we denote it with $\tau \cmodels c$, if and only if for every time instant $t \in \dom(\tau)$, both the value of the trajectory $\tau(t)$ and the value of its derivative $\dot{\tau}(t)$ respect $c$ (we assume that $\dot{\tau}(t)$ respects $c$ when $\dot\tau$ is not defined on $t$).

\section{\hyltl: syntax and semantics}\label{sec:hyltl}

The logic \hyltl is an extension of the well-known temporal logic \ltl to hybrid systems. Given a \emph{finite} set of actions $A$ and a \emph{finite} set of variables $X$, the language of \hyltl is defined from a set of \emph{flow constraints} \fcs over $X$ by the following grammar:
\begin{equation}\label{eq:grammar}
\varphi ::= f \in \fcs \mid a \in A \mid  \neg \varphi \mid \varphi \land \varphi 
						\mid \varphi \lor \varphi \mid
						\X \varphi \mid \varphi \U \varphi \mid \varphi \R \varphi 
\end{equation} 

\noindent In \hyltl constraints from \fcs and actions from $A$ take the role of propositional letters in standard temporal logics, $\neg$, $\land$ and $\lor$ are the usual boolean connectives, $\X$, $\U$ and $\R$ are hybrid counterpart of the standard \emph{next}, \emph{until} and \emph{release} temporal operators. 

\medskip

The semantics of \hyltl is given in terms of \emph{hybrid traces} mixing continuous trajectories with discrete events. Formally, given a set of actions $A$ and a set of variables $X$, an \emph{hybrid trace over $A$ and $X$} is any infinite sequence $\alpha = \tau_1 a_1 \tau_2 a_2 \tau_3 a_3 \ldots$ such that $\tau_i$ is a trajectory over $X$ and $a_i$ is an action in $A$ for every $i \geq 1$. For every $i > 0$, the truth value of a \hyltl formula $\varphi$ over $\alpha$ at position $i$ is given by the truth relation $\mmodels$, formally defined as follows:

\begin{itemize}
	\item for every $f \in \fcs$, $\alpha, i \mmodels f$ if and only if $\tau_i \cmodels f$;
%	\item for every $j \in \jcs$, $\alpha, i \mmodels f$ if and only if  $i > 1$ and $(\tau_{i-1}.\lstate,\tau_i.\fstate) \cmodels j$;
	\item for every $a \in A$, $\alpha, i \mmodels a$ iff $i > 1$ and $a_{i-1} = a$;
	\item $\alpha,i \mmodels \neg\varphi$ if and only if $\alpha,i\not\mmodels\varphi$;
	\item $\alpha,i \mmodels \varphi \land \psi$ if and only if $\alpha,i \mmodels \varphi$ and $\alpha,i \mmodels \psi$;
	\item $\alpha,i \mmodels \varphi \lor \psi$ if and only if $\alpha,i \mmodels \varphi$ or $\alpha,i \mmodels \psi$;
	\item $\alpha,i \mmodels \X\varphi$ if and only if $\alpha,i+1 \mmodels \varphi$;
	\item $\alpha,i \mmodels \varphi \U \psi$ if and only if there exists $j \geq i$ such that $\alpha,j\mmodels\psi$, and for every $i\leq k < j$, $\alpha,k\mmodels\varphi$;	
	\item $\alpha,i \mmodels \varphi \R \psi$ if and only if for all $j \geq i$, if for every $i \leq k < j$, $\alpha, k \not\mmodels \varphi$ then $\alpha,j \mmodels \psi$.
\end{itemize}

Other temporal operators, such as the ``always'' operator $\G$ and the ``eventually'' operator $\F$ can be defined as usual:
\begin{eqnarray*}
	\F \varphi = \top \U \varphi & \qquad\qquad & \G \varphi = \neg \F \neg \varphi
\end{eqnarray*}

\subsection{\hyltl with positive constraints}

In this paper we will pay a special attention on formulas of \hyltl where flow constraints from \fcs appears only in positive form, because it will turn out that they constitue the class of formulas that can be translated into hybrid automata. This particular fragment is called \emph{\hyltl with positive flow constraints}, denoted by \hyltlp, and formally defined by the following grammar:
\begin{equation}\label{eq:p_grammar}
\psi ::=  f \in \fcs \mid a \in A \mid  \neg a \in A 
					\mid \psi \land \psi \mid  \psi \lor \psi \mid
						\X \psi \mid \psi \U \psi \mid \psi \R \psi 
\end{equation} 

Despite being a syntactical fragment, \hyltlp turns out to be equally expressive as the full language, at the price of adding an auxiliary action symbol. In the following, given a constraint $c$ we denote with $\bar{c}$ the corresponding ``dual'' constraint obtained by replacing $<$ with $\geq$, $>$ with $\leq$, $=$ with $\neq$, and so on. Notice that a trajectory $\tau$ that satisfies the negation of a flow constraint $\neg c$ does not necessarily satisfy $\bar{c}$. Indeed, by the semantics of \hyltl we have that $\tau \cmodels \neg c$ if \emph{there exists} a time instant $t$ such that $\tau(t) \not\cmodels c$, while $\tau \cmodels \bar{c}$ if \emph{for all} time instants $t$ we have that $\tau(t) \not\cmodels c$. 

\begin{table}[tbp]
\vspace{-\baselineskip}
%\hrule
%\vspace{-.667\baselineskip}
$$
% smaller spaces between columns in arrays
\setlength{\arraycolsep}{2pt}
\renewcommand{\arraystretch}{1.1}
\begin{array}{rlrlrl}
\hline
	\ \ \text{when } a \in  & A: & \pi(a) = & a	&	\pi(\neg a) = & \neg a	\\
%	\text{when } j \in  & \jcs: & \pi(j) = & j 	& \pi(\neg j) = & \neg j \\
	\text{when } f \in  & \fcs: & \pi(f) = & f \land \X((T \land f) \U \neg T) &
			\pi(\neg f) = & \bar{f} \lor \X (T \U (T \land \bar{f}))  \\
	& & \pi(\varphi \land \psi) =  & \pi(\varphi) \land \pi(\psi) &
			\pi(\varphi \lor \psi) =  & \pi(\varphi) \lor \pi(\psi) \\
 	& & \pi(\varphi \U \psi) =  & (T \lor \pi(\varphi)) \U (\neg T \land \pi(\psi)) & 
			\quad\pi(\varphi \R \psi) =  & (\neg T \land \pi(\varphi)) \R (T \lor \pi(\psi))\ \ \\
	& & \pi(\X\varphi) =  & \X(T \U (\neg T \land \pi(\varphi))) \\
%	\pi(a) = & a \phantom{\U f}	&	\pi(\neg a) = & \neg a				 &
%	\pi(j) = & j 	&	\pi(\neg j) = & \neg j		& \text{for } a \in  & A \text{ and } j \in \jcs  \\
%	\pi(f) = & \mathrlap{f \land \X((T \land f) \U \neg T)} & & &
%	\pi(\neg f) = & \mathrlap{T \U \bar{f}} & & & \text{for } f \in  &\fcs \\
%	\pi(\varphi \land \psi) =  & \mathrlap{\pi(\varphi) \land \pi(\psi)} & & &
%	\pi(\varphi \lor \psi) =  & \mathrlap{\pi(\varphi) \lor \pi(\psi)} & & &
%	\pi(\X\varphi) =  & \X(T \U (\neg T \land \pi(\varphi))) \\
%	\pi(\varphi \U \psi) =  & \mathrlap{(T \lor \pi(\varphi)) \U (\neg T \land \pi(\psi))} & 
%	& & & &
%	\pi(\varphi \R \psi) =  & \mathrlap{(\neg T \land \pi(\varphi)) \R (T \lor \pi(\psi))}
\hline
\end{array}
$$
\vspace{-1.25\baselineskip}
%\vspace{-\baselineskip}
\caption{The translation function $\pi$ from \hyltl to \hyltlp}\label{tab:pitrans}
\end{table}

Hence, given a trajectory $\tau$ with domain $\dom(\tau) = [0, t_{max}]$ such that $\tau \cmodels \neg c$, it is possible to find a point $t \in [0, t_{max}]$ such that $\tau(t) \not\cmodels c$ and we can split $\tau$ into three sub-trajectories $\tau_b$, $\tau_{\bar c}$, $\tau_e$ such that $\tau_b = \tau\rest{[0, t]}$, $\tau_{\bar c} = \tau\rest{[t,t]}$ and $\tau_e = \tau\rest{[t, t_{max}]}$: it is easy to see that $\tau_{\bar c} \cmodels \bar{c}$. In the following, the auxiliary action symbol $T$ will be used to represent the splitting points of trajectories when translating formulas with negated flow constraints to formulas with positive flow constraints only.

Given   a formula of \hyltl in \emph{in negated normal form} $\varphi$, consider the translation function $\pi$ defined in Table~\ref{tab:pitrans}. 
To compare hybrid traces satisfying the original formula $\varphi$ with the ones satisfying $\pi(\varphi)$ we have to remove the occurrences of $T$ from the latter.
To this end, we define a suitable restriction operator over hybrid traces.

\begin{definition}
Let $A$ a set of action, and $B \subset A$. Given a hybrid trace $\alpha = \tau_1 a_2 \tau_2 a_2 \ldots$ over $A$ we define its \emph{restriction} to $B$ as the hybrid trace $\alpha \rest{B}$ obtained from $\alpha$ by first removing the actions not in $B$ and then concatenating adjacent trajectories.
\end{definition}

The following lemma states that $\pi(\varphi)$ is a formula of \hyltlp equivalent to $\varphi$.

\begin{lemma}\label{lem:pitrans}
For every hybrid trace $\alpha$ over $A$ and $X$ and every \hyltl-formula $\varphi$ we have that $\alpha, 1 \mmodels \varphi$ if and only if there exists a hybrid trace $\beta$ over $A \cup \{T\}$ and $X$ such that $\beta\rest{A} = \alpha$ and $\beta, 1 \mmodels \pi(\varphi)$.
 \end{lemma}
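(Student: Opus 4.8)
The plan is to prove the equivalence by structural induction on the formula $\varphi$, but first I would strengthen the statement into a more flexible inductive hypothesis. The stated lemma concerns position $1$ and speaks of the \emph{existence} of a witnessing trace $\beta$; to make the induction go through for subformulas evaluated at arbitrary positions, I would reformulate it so that it relates satisfaction of $\varphi$ by $\alpha$ at an arbitrary position $i$ to satisfaction of $\pi(\varphi)$ by $\beta$ at the corresponding position, where $\beta$ ranges over all traces with $\beta\rest{A} = \alpha$. The key bookkeeping device is a correspondence between positions of $\alpha$ and positions of $\beta$: since $\beta$ is obtained from $\alpha$ by splitting some trajectories at the $T$-points and inserting the auxiliary action $T$, each position $i$ of $\alpha$ corresponds to a unique \emph{block} of consecutive positions of $\beta$ — those delimited by $T$-actions — whose trajectories concatenate to $\tau_i$. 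I would make this index correspondence explicit, say via a monotone map, and phrase the inductive hypothesis relative to it.

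The heart of the argument is the treatment of the flow-constraint cases, since the boolean and purely temporal operators are handled by routine unwinding of the semantics once the index correspondence is fixed. For the positive case $\pi(f) = f \land \X((T \land f)\U \neg T)$, I would argue that $\tau_i \cmodels f$, which by the semantics of flow constraints means $\tau_i(t)\cmodels f$ for \emph{all} $t$, holds exactly when each of the sub-trajectories in the corresponding $\beta$-block satisfies $f$; the formula says $f$ holds now and, after the next step, $f$ persists across all $T$-labelled positions until the block closes with a $\neg T$ position. For the negated case $\pi(\neg f) = \bar f \lor \X(T \U (T \land \bar f))$, I would lean directly on the splitting construction described just before the lemma: $\tau_i \cmodels \neg f$ means there is a time $t$ with $\tau_i(t)\not\cmodels f$, and one can split $\tau_i$ into $\tau_b\,\tau_{\bar c}\,\tau_e$ with the middle degenerate piece satisfying $\bar f$. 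Conversely, concatenating a block containing such a $\bar f$-satisfying degenerate piece reconstitutes a trajectory on which $f$ fails somewhere, i.e. which models $\neg f$. This is the step where the subtle distinction emphasized in the text — that $\tau\cmodels\neg c$ is existential whereas $\tau\cmodels\bar c$ is universal — does the real work.

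For the two nontrivial direction(s) I would proceed as follows. In the forward direction (from $\alpha,i\mmodels\varphi$ to the existence of $\beta$), the construction is essentially local-to-global: I build $\beta$ by refining $\alpha$'s trajectories, inserting $T$-splits precisely where the negated-constraint subformulas demand a witness time, and I must check these refinements can be performed simultaneously and coherently for all subformulas — this is why the persistence conjuncts $(T\lor\pi(\varphi))$ and $(T\land f)$ appear, ensuring that inserting extra $T$-points for one subformula does not invalidate the truth of another. In the backward direction, given $\beta$ with $\beta\rest{A}=\alpha$ and $\beta,1\mmodels\pi(\varphi)$, I read off satisfaction of $\varphi$ by concatenating blocks, using that $\rest{A}$ is exactly the inverse of the splitting operation. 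The main obstacle I anticipate is precisely the \emph{coherence of a single $\beta$} in the forward direction: because $\varphi$ may contain several flow constraints nested under different temporal operators, the $T$-split points requested by distinct subformulas must be reconcilable into one consistent set of splits, and one must verify that the persistence clauses are robust under adding further splits. I expect to handle this by observing that satisfaction of every $\pi(\cdot)$-translated subformula is invariant under inserting additional $T$-actions that merely subdivide an existing trajectory without changing its concatenation — a stutter/refinement-invariance property of the fragment that I would isolate as an auxiliary claim and prove first, after which the structural induction becomes straightforward.
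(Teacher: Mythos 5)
Your proposal is correct, and it shares the paper's core machinery: splitting trajectories at witness points where a negated flow constraint fails, inserting the auxiliary action $T$ at the split points, a position correspondence between $\alpha$ and the $T$-delimited blocks of $\beta$, and a structural induction whose interesting cases are exactly $f$ and $\neg f$. Where you genuinely diverge is in how the forward direction produces a \emph{single} coherent $\beta$. The paper makes the coherence obstacle you anticipate disappear by constructing $\beta$ once, globally: for each trajectory $\tau_i$ it collects the set $C_i$ of \emph{all} flow constraints of $\varphi$ failing on $\tau_i$, chooses one witness time per constraint, splits $\tau_i$ at all of them simultaneously, and then runs the induction relative to this fixed $\beta$ (with the correspondence $\beta^j\rest{A}=\alpha^i$); no merging of per-subformula refinements is ever needed. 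You instead build $\beta$ locally per subformula and reconcile the requested splits via an auxiliary refinement-invariance lemma (satisfaction of $\pi$-images is preserved when a trajectory is further subdivided by extra $T$'s). That lemma is true, and indeed its content is implicitly used in the paper's $f$ and $\neg f$ cases, where a block may contain splits demanded by \emph{other} constraints; so your architecture works, buying modularity and an explicit robustness statement at the cost of an extra lemma, whose release case additionally needs the (easy) observation that each trajectory still receives only finitely many split points, so the merged $\beta$ is a legitimate hybrid trace. One caution: your opening reformulation, in which ``$\beta$ ranges over all traces with $\beta\rest{A}=\alpha$'', must not be read as a universally quantified claim --- for the forward direction an arbitrary such $\beta$ (e.g.\ $\beta=\alpha$ itself, containing no $T$ at all) will in general falsify $\pi(\neg f)$ --- but your later discussion of the coherence problem makes clear you intend the existential form plus invariance, which is sound and matches the lemma as stated.
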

 
\begin{proof}
Let $\alpha = \tau_1 a_1 \tau_2 a_2 \ldots$ be an hybrid trace over $A$ such that $\alpha, 1 \mmodels \varphi$, and let \fcs be the set of flow constraints that appears in $\varphi$. We will build a sequence of hybrid traces $\beta_0, \beta_1, \beta_2, \ldots$ over $A \cup \{T\}$ as follows.
\begin{compactenum}
	\item $\beta_0$ is the empty sequence.
	\item For every $i \geq 1$, consider the $i$-th trajectory $\tau_i$ in $\alpha$, and let $C_i = \{f \in \fcs \mid \tau_i \not\cmodels f\}$. Given an enumeration $f_1, \ldots, f_n$ of $C_i$, we have that it is possible to find a set of time instants $t_1, \ldots, t_n$ such that $\tau_i(t_j) \cmodels \bar{f}$ for every $1 \leq j \leq n$. W.l.o.g., we can assume that $\tau_i.\ftime = t_0 \leq t_1 \leq t_2 \leq \ldots \leq t_n \leq t_{n+1} = \tau_i.\ltime$ and we can define the sequence of trajectories $\mu_1, \mu_2, \ldots, \mu_{2n+1}$ such that
\begin{align}\label{eq:mudef}
	\mu_1 =\; & \tau_i\rest{[t_0,t_1]}, &
	\mu_{2j} = \; & \tau_i\rest{[t_j,t_j]}, &
	\mu_{2j+1} = \; & \tau_i\rest{[t_j,t_{j+1}]} &
	\text{for every } 1 \leq j \leq n
\end{align}

\noindent We define $\beta_i = \beta_{i-1} \mu_1 T \mu_2 T \ldots T \mu_{2n+1} a_{i}$.
\end{compactenum}

\noindent The hybrid trajectory we are looking for is the limit trajectory $\beta = \lim_{i \to \infty} \beta_i$. 

Given an index $i$, we will denote by $\alpha^i$ and $\beta^i$ the suffix of $\alpha$ and of $\beta$ starting at position $i$. We show that $\beta$ respects the following property: 
\quote{for every subformula $\psi$ of $\varphi$ and $i \geq 1$, $\alpha, i \mmodels \psi$ iff $\beta, j \mmodels \pi(\psi)$, where $j$ is the unique index such that $\beta^j\rest{A} = \alpha^i$}. The proof is by induction on $\psi$.

\begin{compactitem}
	\item If $\psi = a$ or $\psi = \neg a$ for some $a \in A$ the property holds trivially.
	\item Suppose $\psi = f$ for some $f \in \fcs$. By the semantics, we have that $\tau_i \cmodels f$. Consider now the sequence $\mu_1 T \mu_2 T \ldots T \mu_{2n+1} a_i$ built in the construction of $\beta_i$, and let $j$ be the index of $\mu_1$ in $\beta$. By~\eqref{eq:mudef} we have that $\mu_h \cmodels f$ for every $1 \leq h \leq 2n+1$. This implies that $\beta,j \mmodels f \land \X ((T\land f)\U\neg T)$.
	\item If $\psi = \neg f$ for some $f \in \fcs$ then we have that $\tau_i \not\cmodels f$. Let $\mu_1 T \mu_2 T \ldots T \mu_{2n+1} a_i$ be the sequence built in the construction of $\beta_i$. Since $f \in C_i$, we have that there exists $t_0 \leq t_k \leq t_{n+1}$ such that $\tau_i(t_k) \cmodels \bar{f}$. By~\eqref{eq:mudef}, this implies that $\mu_k \cmodels \bar{f}$. Let $j$ be the index of $\mu_1$ in $\beta$. Two case may arise: either $\mu_k = \mu_1$ and thus $\beta,j \mmodels \bar{f}$, or $\mu_k \neq \mu_1$ and then $\beta,j \mmodels \X(T\U(T\land\bar{f}))$. In both cases the property is satisfied.
	\item The cases of the boolean operators $\vee$ and $\wedge$ are trivial and can be skipped.
	
	\item Suppose $\psi = \psi_1 \U \psi_2$, and let $i$ be such that $\alpha, i \mmodels \psi_1 \U \psi_2$. By the semantics, we have that there exists $k \geq i$ such that $\alpha, k \mmodels \psi_2$ and, for every $i \leq h < k$, $\alpha, h \mmodels \psi_1$. Now, let $j$ and $l$ be the two indexes such that $\beta^j\rest{A}=\alpha^i$ and $\beta^l\rest{A} = \alpha^k$. By inductive hypothesis we can assume that $\beta,l \mmodels \pi(\psi_2)$, while by the definition of the $\rest{A}$ operator we have that $\beta,l \mmodels a_i \neq T$. Hence, $\beta,l\mmodels \neg T \land \pi(\psi_2)$. Consider now any index $m$ such that $j \leq m < l$. Two cases may arise: either $\beta,m \mmodels T$, or not. In the latter case, we have that it is possible to find an index $i \leq h < k$ such that $\beta^m\rest{A} = \alpha^h$. Since $\alpha,h\mmodels \psi_1$, by inductive hypothesis we have that $\beta,m \mmodels \pi(\psi_1)$. Hence, in both cases $\beta,m\mmodels T \lor \pi(\psi_1)$. This proves that $\beta,j\mmodels(T \lor \pi(\psi_1))\U(\neg T \land \pi(\psi_2)) = \pi(\psi)$.
	
	To prove the converse implication, suppose that $\beta,j \mmodels (T \lor \pi(\psi_1))\U(\neg T \land \pi(\psi_2))$. By the semantics, we have that there exists $l \geq j$ such that $\beta,l \mmodels \neg T \land \pi(\psi_2)$ and, for every $j \leq m < l$, $\beta,m \mmodels T \lor \pi(\psi_1)$. Since $\beta, l \mmodels \neg T$ it is possible to find an index $k$ such that $\beta^l\rest{A} = \alpha^k$. Hence, by inductive hypothesis we have that $\alpha,k \mmodels \psi_2$. Now, let $h$ be such that $i \leq h < k$, and consider the index $m$ such that $\beta^m\rest{A} = \alpha^h$.
By the semantics we have that $\beta,m\mmodels T \lor \pi(\psi_1)$. Since, by definition of the restriction operator, $\beta,m\not\mmodels T$, we have that $\beta,m\mmodels \pi(\psi_1)$ and thus, by inductive hypothesis, that $\alpha,h \mmodels \psi_1$. This proves that $\alpha,i \mmodels \psi_1 \U \psi_2$.
	
	\item The cases of the temporal operators $\X$ and $\R$ can be proved by a similar argument.
	\end{compactitem}

\noindent By the property it is immediate to conclude that, since $\alpha, 1 \mmodels \varphi$ then $\beta, 1 \mmodels \pi(\varphi)$.

To conclude the proof, suppose that there exists a hybrid trace $\beta$ such that $\beta, 1 \mmodels \pi(\varphi)$, and let $\alpha = \beta\rest{A}$. By an induction on the structure of $\varphi$ similar to the one above, we can prove that $\alpha, 1 \mmodels \varphi$.
 \end{proof}

%\noindent Using the base and derived operators of \hyltl it is possible to express a number of different properties, including safety properties, liveness properties, reactivity properties and so on. 

%For instance, in the case of the Thermostat example described in Figure~\ref{fig:thermostat}, the usual property of \quote{keeping the temperature of the room between 15 and 25 degrees} is defined by the following formula:
%%
%\begin{equation}
%	\varphi_{safe} = \G \left(x \geq 15 \land x \leq 25\right)
%\end{equation}
%
%\noindent Safety properties are the most common type of requirements on hybrid systems, but \hyltl can easily go beyond them. For example, it can express reactivity properties like \quote{whenever the heater is turned ON it is eventually turned OFF}:
%%
%\begin{equation}
%	\varphi_{react} = \G \left( \on \rightarrow \F \off \right)
%\end{equation}
%
%\noindent Finally, the presence of discrete actions and of flow constraints in the syntax of \hyltl allow this logic to express properties involving both the discrete and the continuous behaviors of the system, thus capturing their hybrid nature. An example on the thermostat can be \quote{it is not possible that the heater turns on when the temperature is above 21 degrees}:
%%
%\begin{equation}
%	\varphi_\mathit{hyb} = \neg \F \left( x \geq 21 \land \X \on \right)
%\end{equation}

\section{Hybrid Automata}\label{sec:ha}

An hybrid automaton is a finite state machine enriched with continuous dynamics labelling each discrete state (or \emph{location}), that alternates continuous and discrete evolution. In continuous evolution, the discrete state does not change, while time passes and the evolution of the continuous state variables follows the dynamic law associated to the current location. A discrete evolution step consists of the activation of a \emph{discrete transition} that can change both the current location and the value of the state variables, in accordance with the reset function associated to the transition. 
%The interleaving of continuous and discrete evolution is decided by the \emph{invariant} of the location, which must be true for the continuous evolution to keep on going, and by the \emph{guard} predicates, which must be true for a discrete transition to be activated. Guards and invariants are not necessarily complements of each other: when both the invariant and one or more guards are true, both the continuous evolution and the activation of the discrete transitions is allowed and the behaviour of the automaton becomes non-deterministic. 

In this section we recap the definition of Hybrid Automata introduced in~\cite{has2013} to solve the model checking problem for \hyltl. 

\begin{definition}\label{def:ha-syntax}
   A \emph{hybrid automaton} is a tuple $\autH=\langle\Loc,X,$ $A,\Edg,\Dyn,\Rst,\Init\rangle$ such that:
   
   \begin{compactenum}
   	\item $\Loc$ is a finite set of \emph{locations};
   	\item $X$ is a finite set of \emph{variables};
		\item $A$ is a finite set of \emph{actions};
		\item $\Edg \subseteq \Loc \times A \times \Loc$ is a set of \emph{discrete transitions};
		\item $\Dyn$ is a mapping that associates to every location $\ell \in \Loc$ a set of flow constraints $\Dyn(\ell)$ over $X \cup \dot{X}$ describing the \emph{dynamics} of $\ell$;
		\item $\Rst$ is a mapping that associates every discrete transition $(\ell,e,\ell') \in \Edg$ with a set of jump constraints $\Rst(\ell,e,\ell')$ over $\pre{X} \cup X$ describing the guard and reset function of the transition;
		\item $\Init \subseteq \Loc$ is a set of \emph{initial locations}.
\end{compactenum}
\end{definition}

%
%\begin{definition}\label{def:ha-composition}
%	Let $\autH_1 = \langle\Loc_1,X_1,A_1,\Edg_1,\Dyn_1,\Rst_1, \Init_1\rangle$ 
%	and $\autH_2 = \langle\Loc_2,X_2,A_2,\Edg_2,\Dyn_2,\Rst_2,$ $\Init_2\rangle$ be two hybrid automata. The composition $\autH_1 \| \autH_2$ is defined as the hybrid automaton
%	$\autH = \langle\Loc,X,A,\Edg,\Dyn, \Rst,\Init\rangle$ such that:
%	
%\begin{compactenum}
%	\item $\Loc = \Loc_1 \times \Loc_2$;
%
%	\item $X = X_1 \cup X_2$;
%
%	\item $A = A_1 \cup A_2$;
%
%	\item $((\ell_1,\ell_2),a,(\ell_1',\ell_2')) \in \Edg$ iff
%		\begin{inparaenum}[\it (i)]
%			\item $a \in A_i$ and $(\ell_i, a, \ell_i') \in \Edg_i$, or
%			\item $a \not\in A_i$ and $\ell_i = \ell_i'$,
%		\end{inparaenum}
%		for $i = 1,2$;
%
%	\item 	for every $(\ell_1,\ell_2) \in \Loc$ we have that $\Dyn((\ell_1,\ell_2)) = \Dyn_1(\ell_1) \cup \Dyn_2(\ell_2)$;
%
%
%	\item for every $((\ell_1,\ell_2),a,(\ell_1',\ell_2'))\in\Edg$, we have that $\Rst((\ell_1,\ell_2),a,(\ell_1',\ell_2'))$ is 
%		the minimal set of jump constraints such that if $a \in A_i$ then $\Rst_i(\ell_i,a,\ell_i')\subseteq\Rst((\ell_1,\ell_2),a, (\ell_1',\ell_2'))$, and
% if $a \not\in A_i$ then $\{x = x' | x \in X_i \setminus X_{3-i}\}\subseteq\Rst((\ell_1,\ell_2),a,(\ell_1',\ell_2'))$, for  $i = 1,2$;
%	\item $\Init = \Init_1 \times \Init_2$.
%\end{compactenum}
%\end{definition}

The \emph{state} of a hybrid automaton $\autH$ is a pair $(\ell,\bx)$, where $\ell\in \Loc$
is a location and $\bx \in \Val(X)$ is a valuation for the continuous variables.
A state $(\ell,\bx)$ is said to be \emph{admissible} if $(\ell,\bx) \cmodels \Dyn(\ell)$.
Transitions can be either \emph{continuous}, capturing the continuous evolution of the state, or 
\emph{discrete}, capturing instantaneous changes of the state.

\begin{definition}\label{def:ha-semantics}
Let $\autH$ be a hybrid automaton.
The \emph{continuous transition relation} $\trans{\tau}$ between admissible states, 
where $\tau$ is a bounded trajectory over $X$, is
defined as follows:
\begin{equation}\label{eq:cont-trans}
(\ell,\bx) \trans{\tau} (\ell,\bx')  \iff 
  	\tau.\fstate=\bx \wedge \tau.\lstate=\bx' \wedge \tau \cmodels \Dyn(\ell).
\end{equation}
%
%\medskip
%
The \emph{discrete transition relation} $\trans{a}$ between admissible states, where
$a \in A$, is defined as follows:
\begin{equation}\label{eq:disc-trans}
(\ell,\bx) \trans{a} (\ell',\bx') \iff
  \bx \cmodels \Dyn(\ell) \wedge \bx' \cmodels \Dyn(\ell') \wedge (\bx,\bx') \cmodels \Rst(\ell,a,\ell').
\end{equation}
\end{definition}

The above definitions allows an infinite sequence of discrete events to occur in a finite amount of time (Zeno behaviors). Such behaviors are physically meaningless, but very difficult to exclude completely from the semantics. In this paper we assume that all hybrid automata under consideration do not generate Zeno runs. This can be achieved, for instance, by adding an extra clock variable that guarantees that the delay between any two discrete actions is bounded from below by some constant.
Moreover, we assume that all hybrid automata are \emph{progressive}, that is, that all runs can be extended to an infinite one: it is not possible to stay forever in a location and never activate a new discrete action. 

We can view progressive, non-Zeno hybrid automata as \emph{generators} of hybrid traces, as formally expressed by the following definition.

\begin{definition}\label{def:ha-trace}
Let $\autH$ be a progressive, non-Zeno hybrid automaton, and let $\alpha = \tau_1 a_1 \tau_2 a_2 \ldots$ be a infinite hybrid trace over $X$ and $A$. We say that $\alpha$ is \emph{generated} by $\autH$ if there exists a corresponding sequence of locations $\ell_1 \ell_2 \ldots$ such that $\ell_1 \in \Init$ and, for every $i \geq 1$:
\begin{inparaenum}[(i)]
\item  $(\ell_i,\tau_i.\fstate) \trans{\tau_i} (\ell_i,\tau_i.\lstate)$, and
\item $(\ell_i,\tau_i.\lstate) \trans{a_i} (\ell_{i+1},\tau_{i+1}.\fstate)$.
\end{inparaenum}
\end{definition}

Our definition of hybrid automata admits composition, %Usually, compositional formalisms introduce  a distinction between internal variables and actions (that are hidden to the other automata) and input/output external ones (that are shared between automata)~\cite{Lynch03}. 
%%This is usually justified by the need to describe in a precise way how a component interacts with the environment and the other components of the system. 
%However, in our setting we are interested in the analysis of a single component on its own, and we 
under the assumpion that all variables and actions are shared between the different automata. The formal definition of the parallel composition operator $\|$ can be found in~\cite{has2013}. In this paper it is sufficient to recall that it respects the usual ``compositionality property'', that is, that the set of hybrid traces generated by a composition of hybrid automata corresponds to the intersection of the hybrid traces generated by the components (up to projection to the correct set of actions and variables).

\section{Model checking \hyltl}

In analogy with the classical automata-theoretic approach, in~\cite{has2013} the model checking problem for \hyltl has been solved by translating the \hyltl formula into an equivalent hybrid automaton, enriched with a suitable \emph{B\"uchi acceptance condition} to identify the traces generated by the automaton that fulfills the semantics of \hyltl.
%To this end, some of the locations of the hybrid automaton are marked as \emph{final} and we extend Definition~\ref{def:ha-syntax} with a \emph{Generalized B\"uchi accepting condition} for final locations.

\begin{definition}\label{def:ha-buechi}
 A \emph{Hybrid Automaton with B\"uchi condition (BHA)} is a tuple $\autH=\langle\Loc, X,A,\Edg,\Dyn,$ $\Rst,\Init,\cvF\rangle$ such that $\langle\Loc,X,A,\Edg,\Dyn,\Rst,\Init\rangle$ is a Hybrid Automaton, and   $\cvF \subseteq {\Loc}$ is a finite set final locations.
\end{definition}

We say that a hybrid trace $\alpha = \tau_1 a_1 \tau_2 a_2 \ldots$ is \emph{accepted} by a BHA $\autH$ if there exists an \emph{infinite} sequence of locations $\ell_1 \ell_2 \ldots$ such that:
\begin{compactenum}[(i)]
\item $\ell_1 \in \Init$;
\item for every $i \geq 1$, $(\ell_i,\tau_i.\fstate) \trans{\tau_i} (\ell_i,\tau_i.\lstate)$; 
\item for every $i \geq 1$, $(\ell_i,\tau_i.\lstate) \trans{a_i} (\ell_{i+1},\tau_{i+1}.\fstate)$;
\item there exists $\ell_f \in \cvF$ that occurs infinitely often in the sequence.
\end{compactenum}

\noindent By the above definition, not all sequences generated by the automaton are accepting: only those that respect the additional accepting condition are considered. 

By the definition of the dynamics, hybrid automata can enforce only \emph{positive constraints} on the continuous flow of the system. Hence, they can only recognize formulas of the positive flow fragment of \hyltl, as summarized by the following theorem. 

\begin{theorem}[\cite{has2013}]\label{th:hyltl-to-bha}
Given a \hyltlp formula $\varphi$, it is possible to build a BHA $\autH_\varphi$ that accepts all and only those hybrid traces that satisfies $\varphi$.
\end{theorem}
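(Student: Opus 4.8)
The plan is to adapt the classical tableau (closure/atom) translation from \ltl to Büchi automata to the hybrid setting: each \emph{atom} (a locally consistent set of subformulas) becomes a location of the BHA, the positive flow constraints asserted by an atom become the \emph{dynamics} of that location, the actions labelling the edges witness the action literals, and a generalized Büchi condition (collapsed to the single set $\cvF$ permitted by Definition~\ref{def:ha-buechi}) forces the fulfilment of until-eventualities.

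First I would fix the \emph{closure} $\cl(\varphi)$ to be the set of subformulas of $\varphi$, and call a set $D \subseteq \cl(\varphi)$ an \emph{atom} when it is locally consistent: closed under the conjunction rule ($\psi_1 \land \psi_2 \in D \Rightarrow \psi_1,\psi_2 \in D$), choosing a disjunct for every $\lor$ it contains, respecting the until rule ($\psi_1 \U \psi_2 \in D \Rightarrow \psi_2 \in D$ or $\psi_1 \in D$) and the release rule ($\psi_1 \R \psi_2 \in D \Rightarrow \psi_2 \in D$), and \emph{action consistent}, containing at most one positive action literal and never both $a$ and $\neg a$. The BHA $\autH_\varphi$ then has locations $\Loc = \{D : D \text{ is an atom}\}$; dynamics $\Dyn(D) = \{f \in \fcs : f \in D\}$, so that the continuous transition relation forces $\tau \cmodels f$ for exactly the positive flow constraints claimed by $D$; jump relations $\Rst$ trivially true, since a formula constrains no reset; edges $(D,a,D') \in \Edg$ whenever the next-time obligations of $D$ are met by $D'$ (that is, $\psi \in D'$ for every $\X\psi \in D$, $\psi_1\U\psi_2 \in D'$ whenever $\psi_1\U\psi_2 \in D$ and $\psi_2 \notin D$, and $\psi_1 \R \psi_2 \in D'$ whenever $\psi_1 \R \psi_2 \in D$ and $\psi_1 \notin D$) and the incoming action $a$ is consistent with the action literals of $D'$ (every positive action literal of $D'$ equals $a$, and $a$ differs from every action whose negation lies in $D'$); and initial locations $\Init = \{D : \varphi \in D,\ D \text{ contains no positive action literal}\}$, reflecting that position $1$ has no incoming action. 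Acceptance is the generalized Büchi condition with one set $F_{\psi_1\U\psi_2} = \{D : \psi_1\U\psi_2 \notin D \text{ or } \psi_2 \in D\}$ per until subformula, converted to a single $\cvF$ by the usual index-counting product.

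Correctness then splits into two inductions on subformula structure. For \emph{soundness}, given an accepting run $D_1 D_2 \ldots$ over a hybrid trace $\alpha = \tau_1 a_1 \tau_2 \ldots$, I would prove that $\psi \in D_i \Rightarrow \alpha,i \mmodels \psi$: the flow-constraint case uses $f \in \Dyn(D_i)$ together with the continuous step $(\ell_i,\tau_i.\fstate) \trans{\tau_i} (\ell_i,\tau_i.\lstate)$, which guarantees $\tau_i \cmodels f$; the action cases use the edge labels; and the until case uses the Büchi condition to rule out an eventuality postponed forever. For \emph{completeness}, given $\alpha,1 \mmodels \varphi$, I would set $D_i = \{\psi \in \cl(\varphi) : \alpha,i \mmodels \psi\}$ and verify that each $D_i$ is an atom, that $D_i$ and $D_{i+1}$ are joined by an edge labelled $a_i$, that $D_1 \in \Init$, and that the run is accepting.

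The step I expect to be the main obstacle — and precisely the reason the theorem is stated for \hyltlp rather than full \hyltl — is the flow-constraint case of the completeness direction. A location can only impose a \emph{universal-in-time} condition $\tau \cmodels \Dyn(\ell)$, which is exactly a conjunction of positive flow constraints; it cannot express the \emph{existential-in-time} condition $\tau \not\cmodels f$ that a negated flow constraint would demand. Hence defining $\Dyn(D_i)$ as the positive constraints in $D_i$ is consistent with $\tau_i$ exactly because negated flow constraints never occur, so no location is ever asked to realize an unrealizable dynamics; this is where Lemma~\ref{lem:pitrans} earns its keep, having already replaced genuine negations of flow constraints by the auxiliary splitting action. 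The remaining work is bookkeeping: checking action consistency at the initial position (no positive action literal) and along every edge, and verifying the standard collapse of the generalized Büchi condition to the single final set $\cvF$.
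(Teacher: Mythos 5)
Your construction is correct, but it is not the route this paper takes: it is, in essence, the declarative tableau construction that the paper attributes to \cite{has2013}, which is precisely why Theorem~\ref{th:hyltl-to-bha} is stated here with a citation rather than a proof. Your atoms-as-locations scheme (closure sets, positive flow constraints as $\Dyn$, trivial $\Rst$, next/until/release propagation on edges, one generalized B\"uchi set per until collapsed by a counter product) is exactly the ``declarative procedure that generates exponentially many locations upfront'' that this paper sets out to replace. The paper's own machinery reproves the same statement by a genuinely different pipeline: translate $\varphi$ into a discrete \ltl formula $\gamma(\varphi)$ over the propositions $\fcs \cup B$, where actions are binary-encoded by the letters of $B$ (Lemma~\ref{lem:hyltl-to-ltl}); build a discrete B\"uchi automaton $\autA_{\gamma(\varphi)}$ with an off-the-shelf \ltl-to-B\"uchi tool (Lemma~\ref{lem:hyltl-to-ba}); and then lift $\autA_{\gamma(\varphi)}$ to a BHA via Algorithm~\ref{alg:Hphi}, whose correctness proof (Theorem~\ref{teo:Hphi-correctness}) uses the monotonicity Lemma~\ref{lem:negation-invariance} to dispose of the ``spurious'' discrete sequences that correspond to no hybrid trace. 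Both approaches hinge on the point you correctly single out --- a location's dynamics can only impose a universal-in-time conjunction of positive flow constraints, which is why only \hyltlp is translatable --- but they buy different things: yours is self-contained and needs no discrete back end, while the paper's delegates all state-space optimization to mature \ltl tools, avoids both the upfront exponential atom enumeration and the extra counter product for the generalized B\"uchi condition, and on the paper's thermostat example produces 3 locations where the declarative construction produces 18.
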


Theorem~\ref{th:hyltl-to-bha} and Lemma~\ref{lem:pitrans} can be exploited to solve the model checking problem for full \hyltl as follows. 
Let $\autH_S$ be a hybrid automaton representing the system under verification, and let $\varphi$ be the \hyltl formula representing a property that the system should respect. Consider the formula $\neg\varphi$ and its translation $\overline{\varphi} = \pi(\neg\varphi)$. By Lemma~\ref{lem:pitrans} we have that $\overline{\varphi}$ is a formula of \hyltlp that is equivalent to $\neg\varphi$, and thus we can build a BHA $\autH_{\overline\varphi}$ that is equivalent to \emph{the negation of the property}: it accepts all the hybrid traces that \emph{violates} the property we want to verify. Now, if we compose the automaton for the system with the automaton for $\overline\varphi$ we obtain a BHA $\autH_S \| \autH_{\overline\varphi}$ that accepts only those hybrid traces that are generated by the system and violates the property. This means that $\autH_S$ respects the property $\varphi$  if and only $\autH_S \| \autH_{\overline\varphi}$ \emph{does not accept any hybrid trace}. 

It is worth pointing out that the reachability problem of hybrid automata is undecidable. This means that the model checking of \hyltl is an undecidable problem as well (reachability can be expressed by an eventuality formula). 
However, this does not mean that out logic is completely intractable. A number of different approximation techniques have been developed in the past years to obtain an answer to the reachability problem (at least in some cases), and they can be exploited to solve the model checking problem of \hyltl as well. Indeed, $\autH_S \| \autH_{\overline\varphi}$ accepts an hybrid trace if and only if there exists a loop that includes a final location and that is reachable from the initial states. As shown in~\cite{has2013}, this property can be reduced to a reachability property that can be tested by existing tools for the analysis of hybrid automata. The only thing that one needs to do is to write a procedure implementing the construction of $\autH_{\overline\varphi}$, and then send the results to the reachability analysis tool.

\section{An improved construction algorithm}

The algorithm presented in~\cite{has2013} to build a BHA equivalent to a \hyltlp-formula $\varphi$ is based on a declarative construction. While being simple to understand, it suffers of a major drawback from the efficiency point of view: it generates exponentially many locations upfront, even though many of them may be inconsistent, redundant or unreachable. This implies that the resulting BHA can be very big, even for very simple formulas. In this section we describe an improved construction algorithm, based on the following steps:

\begin{enumerate}[\it A.]
	\item the \hyltlp-formula $\varphi$ is first translated into a suitable formula of discrete \ltl ${\gamma(\varphi)}$;
	\item a discrete B\"uchi automaton $\autA_{\gamma(\varphi)}$, equivalent to ${\gamma(\varphi)}$, is built using one of the many optimized tools available in the literature;
	\item a BHA $\autH_\varphi$, equivalent to $\varphi$, is built from $\autA_{\gamma(\varphi)}$.
\end{enumerate}

 The new algorithm improves the original one by building a smaller BHA, thanks to the use of optimized tools for \ltl in step B.

\subsection{From \hyltl to discrete \ltl}

Let \fcs and $A$ be respectively the set of all flow constraints and discrete actions appearing in $\varphi$. 
For the sake of simplicity, we will assume that $\| A \cup \{T\} \| = 2^n - 1$ for some $n \in \bbN$ (if this is not the case, we can always add some fresh action symbols to $A$ that will not appear in the formula). Under this assumption we can represent action symbols from $A \cup \{T\}$ by means of a set of $n$ propositional letters $B = \{b_0, \ldots, b_{n-1}\}$, where every possible combination of the truth values, but the one where all letters are false, uniquely identify one action symbol. For every $a \in A \cup \{T\}$ let $\bb(a)$ be the corresponding encoding. By definition, we put $\bb(T) = \bigwedge_{i=0}^{n-1} b_i$.

If we consider $\ap = \fcs \cup \{b_0, \ldots, b_{n-1}\}$ as a set of propositional letters for discrete \ltl, we have that we can transform any hybrid trace $\alpha = \tau_1 a_1 \tau_2 a_2 \dots$ into a discrete sequence $\Sigma(\alpha) = \sigma_1 \sigma_2 \sigma_3 \dots$ where every element is a subset of $\ap$ defined as follows: $\sigma_1 = \{f \in \fcs \mid \tau_1 \cmodels f\}$; for every $i > 1$, $\sigma_i = \{f \in \fcs \mid \tau_i \cmodels f\} \cup  \{b_j \in B \mid b_j$ holds true in $\bb(a_{i-1})\}$.
%\cup \{j \in \jcs \mid (\tau_{i-1}.\lstate, \tau_{i}.\fstate) \cmodels j\}$.

\begin{table}[tbp]
%\vspace{-\baselineskip}
%\centering
\vspace{-\baselineskip}
$$
% smaller spaces between columns in arrays
\setlength{\arraycolsep}{2pt}
\renewcommand{\arraystretch}{1.1}
\begin{array}{rlrlrl}
\hline
	&  &
	\gamma(\varphi) =  & \bigwedge_{i=0}^{n-1} \neg b_i \land \gamma_0(\varphi) \\
\hline
  \gamma_0(a) = & \bb(a)	  &
	\gamma_0(f) = & f &
	\quad \text{when } a \in & A \text{ or } f \in \fcs \\
	\gamma_0(\neg\varphi) =  & \neg \gamma_0(\varphi) &
 	\gamma_0(\varphi \land \psi) =  & \gamma_0(\varphi) \land \gamma_0(\psi) &
	\gamma_0(\varphi \lor \psi) =  & \gamma_0(\varphi) \lor \gamma_0(\psi) \\
	\quad\gamma_0(\X\varphi) =  & \dX(\gamma_0(\varphi))\quad &
 	\quad\gamma_0(\varphi \U \psi) =  & \gamma_0(\varphi) \dU \gamma_0(\psi)\quad & 
	\quad\gamma_0(\varphi \R \psi) =  & \gamma_0(\varphi) \dR \gamma_0(\psi)\quad \\
\hline
\end{array}
$$
%\vspace{-3\baselineskip}
%\hrule
\vspace{-1.25\baselineskip}
\caption{The translation function $\gamma$ from \hyltl to \ltl}\label{tab:gammatrans}
\end{table}

Now, let $\gamma(\varphi)$ be the discrete \ltl formula obtained from $\varphi$ by means of the translation function $\gamma$ defined in Table~\ref{tab:gammatrans}. It is easy to see that $\Sigma(\alpha)$ is a model for $\gamma(\varphi)$, as proved by the following lemma.

\begin{lemma}\label{lem:hyltl-to-ltl}
For every hybrid trace $\alpha$, $\alpha, 1 \mmodels \varphi$ if and only if $\Sigma(\alpha), 1 \mmodels \gamma(\varphi)$.
\end{lemma}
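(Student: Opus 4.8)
The plan is to prove a position-uniform strengthening by structural induction and then specialise to position $1$. First I would dispose of the outer conjunct of $\gamma(\varphi)$: by the definition of $\Sigma$, the first letter $\sigma_1$ contains no propositional letter from $B$, so $\Sigma(\alpha),1 \mmodels \bigwedge_{i=0}^{n-1} \neg b_i$ holds unconditionally. Hence $\Sigma(\alpha),1 \mmodels \gamma(\varphi)$ if and only if $\Sigma(\alpha),1 \mmodels \gamma_0(\varphi)$, and it suffices to compare $\varphi$ with $\gamma_0(\varphi)$.

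The core claim I would establish is: \emph{for every subformula $\psi$ of $\varphi$ and every $i \geq 1$, $\alpha, i \mmodels \psi$ if and only if $\Sigma(\alpha), i \mmodels \gamma_0(\psi)$.} The quantification over all positions $i$ is essential, since the temporal cases invoke the induction hypothesis at shifted positions; the lemma then follows by taking $i = 1$ and $\psi = \varphi$. The proof goes by induction on the structure of $\psi$.

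For the atomic cases, when $\psi = f \in \fcs$ the definition of $\sigma_i$ gives $f \in \sigma_i$ iff $\tau_i \cmodels f$, which is exactly $\alpha, i \mmodels f$, while $\Sigma(\alpha), i \mmodels f$ iff $f \in \sigma_i$; so the two sides agree for every $i$. When $\psi = a \in A$, recall $\gamma_0(a) = \bb(a)$, read as the conjunction of literals over $B$ specifying the truth assignment that encodes $a$. At position $1$ all letters of $B$ are false in $\sigma_1$, which falsifies every $\bb(a)$ (no encoding is the all-false assignment), matching $\alpha, 1 \not\mmodels a$; at a position $i > 1$ the $B$-part of $\sigma_i$ is by construction the encoding of $a_{i-1}$, so injectivity of $\bb$ yields $\Sigma(\alpha), i \mmodels \bb(a)$ iff $a = a_{i-1}$, i.e.\ iff $\alpha, i \mmodels a$. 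The boolean cases follow immediately, and the temporal cases $\X$, $\U$, $\R$ are a mechanical transfer: $\gamma_0$ commutes with every connective and the discrete operators $\dX$, $\dU$, $\dR$ are defined by exactly the same position-indexed clauses as $\X$, $\U$, $\R$. Since $\alpha$ and $\Sigma(\alpha)$ share the same index set, one unfolds both semantics, applies the induction hypothesis at each witnessing and each intermediate position, and concludes.

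I expect the only genuinely non-routine step to be the action base case, where the care lies in the boundary at position $1$ (no preceding action, all $b_j$ false) and in the appeal to injectivity of the encoding $\bb$. I would also flag one potential confusion that turns out harmless here: although at the trajectory level $\tau \cmodels \neg f$ (meaning $f$ fails at some instant) differs from $\tau \cmodels \bar f$, this asymmetry does not affect the present lemma. At the discrete level $f$ is a propositional letter whose truth at $i$ is defined to be precisely $\tau_i \cmodels f$, so $\Sigma(\alpha), i \mmodels \neg f$ iff $f \notin \sigma_i$ iff $\tau_i \not\cmodels f$ iff $\alpha, i \mmodels \neg f$, and the induction handles $\neg$ uniformly. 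The genuine subtlety of negated flow constraints is exactly what the $\pi$-translation of Lemma~\ref{lem:pitrans} addresses when passing to hybrid automata, and it is orthogonal to this step.
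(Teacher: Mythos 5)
Your proposal is correct and takes essentially the same route as the paper's proof: both establish the position-uniform claim that $\alpha, i \mmodels \psi$ iff $\Sigma(\alpha), i \mmodels \gamma_0(\psi)$ by structural induction on the formula, and then discharge the outer conjunct $\bigwedge_{i=0}^{n-1}\neg b_i$ by noting it holds at position $1$ by construction of $\Sigma(\alpha)$. If anything, your write-up is more detailed than the paper's on the atomic action case (the position-$1$ boundary and the injectivity of the encoding $\bb$), which the paper dismisses as immediate from the definition of $\Sigma(\alpha)$.
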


\begin{proof}
Let $\varphi$ be a \hyltl formula, and $\alpha$ a hybrid trace. We prove the lemma by showing that the following stronger claim holds:

\begin{center}
for every $i \geq 1$, $\alpha, i \mmodels \varphi$ if and only if $\Sigma(\alpha), i \mmodels \gamma_0(\varphi)$.
\end{center}

\noindent We reason by induction on the structure of $\varphi$:
\begin{compactitem}
\item if $\varphi = a$, with $a \in A$, then $\gamma_0(a) = \bb(a)$ and the claim follows easily by the definition of $\Sigma(\alpha)$;
\item if $\varphi = f$, with $f \in \fcs$, then $\gamma_0(f) = f$ and the claim follows easily by the definition of $\Sigma(\alpha)$;

\item the boolean cases are trivial and thus skipped;

\item when $\varphi = \X\psi$, we have that $\alpha, i \mmodels \X\psi$ iff $\alpha,i+1\mmodels\psi$. By inductive hypothesis we have that $\Sigma(\alpha),i+1\mmodels\gamma_0(\psi)$, from which we can conclude that $\Sigma(\alpha),i\mmodels\dX\,\gamma_0(\psi)$;

\item suppose $\varphi = \psi_1 \U \psi_2$. By the semantic of \hyltl, we have that $\alpha, i \mmodels \psi_1\U\psi_2$ iff there exists $j \geq i$ such that $\alpha,j\mmodels\psi_2$, and for every $i\leq k < j$, $\alpha,k\mmodels\psi_1$. By inductive hypothesis we have that $\Sigma(\alpha),j\mmodels\gamma_0(\psi_2)$ and that $\Sigma(\alpha),k\mmodels\gamma_0(\psi_1)$ for every $i\leq k < j$. Hence, $\Sigma(\alpha), i \mmodels \gamma_0(\psi_1)\dU\gamma_0(\psi_2)$ and the claim is proved.
\item the case when $\varphi = \psi_1 \R \psi_2$ is analogous.
\end{compactitem}
To conclude the proof it is sufficient to consider that, by definition, $\Sigma(\alpha), 1 \mmodels \bigwedge_{i=0}^{n-1} \neg b_i$. Hence, from the claim it is immediate to conclude that $\Sigma(\alpha), 1 \mmodels \bigwedge_{i=0}^{n-1} \neg b_i \land \gamma_0(\varphi)$ if and only if $\alpha, 1 \mmodels \varphi$.
\end{proof}

When $\varphi$ is a formula of \hyltlp we have that also $\gamma(\varphi)$ is a formula where flow constraints appear only in positive form. Hence, $\gamma(\varphi)$ cannot force the negation of a flow constraint to hold in any of the elements $\sigma_i$ of a discrete sequence, as formally stated by the following lemma.

\newcommand{\Rho}{\mathrm{P}}

\begin{lemma}\label{lem:negation-invariance}
Let $\Sigma = \sigma_1 \sigma_2 \ldots$ and $\Rho = \rho_1 \rho_2 \ldots$ be two discrete sequences such that for every $i \geq 1$, $\sigma_i \cap B = \rho_i \cap B$ (the sequences agrees on the propositional letters in $B$) and $\sigma_i \subseteq \rho_i$ (every flow constraint that is true in $\Sigma$ is true also in $\Rho$). Then, for every \ltl formula   $\gamma$ where flow constraints appear only in positive form and index $j \geq 1$, if $\Sigma, j \mmodels \gamma$ then $\Rho, j \mmodels \gamma$.
\end{lemma}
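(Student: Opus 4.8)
The plan is to prove the statement by structural induction on $\gamma$, after first reducing to negation normal form so that the single delicate point — the behaviour of negation — is confined to the literals. Since the operator set of our \ltl includes both $\dU$ and its dual $\dR$ (together with $\land$ and $\lor$, and with $\dX$ self-dual), every formula in which flow constraints occur only positively is equivalent to one in negation normal form whose atoms are flow constraints $f \in \fcs$ occurring only positively, and propositional letters $b_i \in B$ occurring in either polarity, $b_i$ or $\neg b_i$. I would therefore assume $\gamma$ to be in this normal form and carry out the induction over the grammar with leaves $\{f, b_i, \neg b_i\}$ and connectives $\{\land, \lor, \dX, \dU, \dR\}$, stating the induction hypothesis uniformly as: \emph{for all indices $j$, if $\Sigma, j \mmodels \delta$ then $\Rho, j \mmodels \delta$}, for each immediate subformula $\delta$.

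For the base cases I would exploit the precise asymmetry between the two hypotheses. For $\gamma = f$: $\Sigma, j \mmodels f$ means $f \in \sigma_j$, and since $\sigma_j \subseteq \rho_j$ we get $f \in \rho_j$, hence $\Rho, j \mmodels f$. For $\gamma = b_i$ and $\gamma = \neg b_i$ I would instead use the equality $\sigma_j \cap B = \rho_j \cap B$, which yields $b_i \in \sigma_j \iff b_i \in \rho_j$ and so preserves satisfaction of a $B$-literal in \emph{both} directions. Thus the inclusion hypothesis $\sigma_j \subseteq \rho_j$ is used exactly for the positive flow constraints, while the equality hypothesis on $B$ handles the propositional letters, which are the only atoms allowed to appear negated.

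The inductive cases are then routine, because each remaining connective is monotone in the relevant sense. The semantics of $\land$, $\lor$, $\dX$, and $\dU$ transfer the implication directly, keeping the same witnessing index and the same interval of positions. The $\dR$ case is the only one that touches its first argument negatively: given $\Sigma, j \mmodels \delta_1 \dR \delta_2$ and an index $j' \geq j$ at which all positions $k \in [j, j')$ satisfy $\Rho, k \not\mmodels \delta_1$, I would apply the induction hypothesis for $\delta_1$ in contrapositive form, $\Rho, k \not\mmodels \delta_1 \Rightarrow \Sigma, k \not\mmodels \delta_1$ (the very same statement), deduce $\Sigma, j' \mmodels \delta_2$ from the release semantics, and transfer it by the induction hypothesis for $\delta_2$. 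Hence we never invoke monotonicity of a flow constraint in the opposite direction.

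The only genuine obstacle is conceptual rather than computational: one must notice that the two hypotheses are deliberately different for the two kinds of atom, and that a naive induction over formulas containing unrestricted negation would fail, since $\neg$ reverses the direction of the implication and would demand monotonicity of flow constraints the wrong way. Passing to negation normal form — legitimate precisely because $\dR$ is available as the dual of $\dU$ — is what removes this difficulty, guaranteeing that flow constraints are never pushed under a negation, so that only $B$-literals, on which $\Sigma$ and $\Rho$ agree exactly, ever appear negated.
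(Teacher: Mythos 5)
Your proof is correct and takes essentially the same route as the paper's: a structural induction whose base cases exploit the deliberate asymmetry of the hypotheses, using $\sigma_j \subseteq \rho_j$ for positive flow constraints and $\sigma_j \cap B = \rho_j \cap B$ for $B$-literals (in both polarities). The paper's own proof dismisses the inductive cases as routine; your explicit preliminary reduction to negation normal form and your contrapositive treatment of the $\dR$ case supply precisely the details the paper glosses over (and are genuinely needed, since the translated formulas contain negations of conjunctions such as $\neg\bb(a)$, which the paper's stated base cases do not literally cover).
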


\begin{proof}
Suppose $\Sigma, j \mmodels \gamma$. We prove the claim by induction on the structure of $\gamma$.
\begin{compactitem}
\item If $\gamma = b_k$ or $\gamma = \neg b_k$, for some $b_k \in B$, we have that the claim follows immediately by the fact that $\sigma_j \cap B = \rho_j \cap B$;
\item If $\gamma = f$ for some $f \in \fcs$, by the semantics of \ltl we have that $f \in \sigma_j$. By hypothesis $\sigma_j \subseteq \rho_j$ and this implies that $\Rho, j \mmodels f$;
\item The remaining cases can be easily proved from the inductive hypothesis and the semantics of \ltl. \qedhere
\end{compactitem}
\end{proof}

\subsection{Building the B\"uchi automaton $\autA_{\gamma(\varphi)}$}

Since the seminal work of Vardi and Wolper~\cite{Vardi1986}, translation of \ltl formulas into equivalent B\"uchi automata plays an important role in many model checking and satisfiability checking algorithms. This led to the development of many translation algorithms exploiting several heuristics and optimization techniques. According to the experiments in~\cite{Rozier2010}, two leading tools are LTL2BA~\cite{Gastin2001} and SPOT~\cite{Duret-Lutz2011}. A new version of the former, called LTL3BA, has been recently introduced~\cite{Babiak2012}. According to the authors, it is faster and it produces smaller automata than LTL2BA, while it produces automata of similar quality with respect to SPOT, being usually faster.

We choose to use LTL3BA as the tool for translating the formula $\gamma(\varphi)$ into the B\"uchi automaton $\autA_{\gamma(\varphi)}$, since it is a state-of-the-art tool that is freely available under an open source license. Nevertheless, the high level \hyltlp translation algorithm is independent from the specific tool used to build $\autA_{\gamma(\varphi)}$, and can be easily adapted to use other tools. 

The output of LTL3BA is a B\"uchi automaton $\autA_{\gamma(\varphi)}$ of the form  $\langle Q, q_0, \delta, F \rangle$, where $Q$ is the set of states, $q_0$ is the unique initial state, $\delta$ is the transition relation and $F$ is the set of final states. 
To merge many transitions into a single one, the transitions are labelled with conjunctions of atomic propositions from \ap: the automaton can fire a transition $(q, \beta, q')$ whenever it reads a symbol $\sigma_j$ of the discrete sequence that satisfies the boolean formula $\beta$. Since $\gamma(\varphi)$ is a formula where flow constraints appear only positively, Lemma~\ref{lem:negation-invariance} guarantees that we can assume, without loss of generality, that in the boolean formulas labeling the transitions of $\autA_{\gamma(\varphi)}$ flow constraints appear only positively. 
The following lemma connects the language of $\autA_{\gamma(\varphi)}$ with the set of hybrid traces satisfying $\varphi$.

\begin{lemma}\label{lem:hyltl-to-ba}
Let $\varphi$ be a \hyltlp formula, and $\alpha$ a hybrid trace. Then $\alpha, 1 \mmodels \varphi$ if and only if $\Sigma(\alpha)$ is accepted by $\autA_{\gamma(\varphi)}$.
\end{lemma}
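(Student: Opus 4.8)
The plan is to obtain the statement purely by composition, chaining Lemma~\ref{lem:hyltl-to-ltl} with the correctness of the LTL-to-B\"uchi translation, so that no fresh induction on $\varphi$ is needed. The single external fact I would rely on is the standard Vardi--Wolper correctness guarantee of the tool used to produce $\autA_{\gamma(\varphi)}$: the automaton is \emph{language-equivalent} to the formula it is built from, i.e.\ for every $\omega$-sequence $\Sigma$ over $2^{\ap}$ we have that $\Sigma$ is accepted by $\autA_{\gamma(\varphi)}$ if and only if $\Sigma, 1 \mmodels \gamma(\varphi)$. Since $\gamma(\varphi)$ was constructed precisely so that $\autA_{\gamma(\varphi)}$ recognizes it, this guarantee is exactly what the construction of the previous subsection delivers.

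Concretely, I would first instantiate this guarantee at the particular sequence $\Sigma = \Sigma(\alpha)$, which is a well-defined $\omega$-word over $2^{\ap}$ by the definition of $\Sigma$; this yields that $\Sigma(\alpha)$ is accepted by $\autA_{\gamma(\varphi)}$ if and only if $\Sigma(\alpha), 1 \mmodels \gamma(\varphi)$. Then I would append Lemma~\ref{lem:hyltl-to-ltl}, which states that $\alpha, 1 \mmodels \varphi$ if and only if $\Sigma(\alpha), 1 \mmodels \gamma(\varphi)$. Composing the two equivalences gives $\alpha, 1 \mmodels \varphi$ if and only if $\Sigma(\alpha)$ is accepted by $\autA_{\gamma(\varphi)}$, proving both directions simultaneously.

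The one point that deserves care, and where I expect the only real friction, is the match between the \emph{operational} acceptance of $\autA_{\gamma(\varphi)}$ and the \emph{denotational} LTL semantics underlying Lemma~\ref{lem:hyltl-to-ltl}. Since the transitions are labelled by boolean guards $\beta$ over $\ap$ and a transition fires on $\sigma_j$ exactly when $\sigma_j \models \beta$, one has to confirm that this guarded run semantics reproduces the truth conditions used to define $\Sigma(\alpha)$, namely that a flow constraint $f$ is read as true at position $j$ precisely when $f \in \sigma_j$ (equivalently $\tau_j \cmodels f$) and a letter $b_k$ as true precisely when $b_k \in \sigma_j$. Because $\varphi$ is a \hyltlp formula, $\gamma(\varphi)$ is positive in flow constraints, so by Lemma~\ref{lem:negation-invariance} the guards $\beta$ may be assumed positive in flow constraints as well; this is exactly what ensures that reading $\Sigma(\alpha)$ --- which records the satisfied flow constraints and nothing more --- drives the automaton in agreement with satisfaction of $\gamma(\varphi)$, and it is also the property that will later permit realizing each guard as the dynamics of a location when $\autH_\varphi$ is constructed from $\autA_{\gamma(\varphi)}$.
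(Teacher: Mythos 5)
Your proof is correct and coincides with the paper's intended argument: the paper states this lemma without an explicit proof precisely because it follows by chaining Lemma~\ref{lem:hyltl-to-ltl} with the standard language-equivalence guarantee of the \ltl-to-B\"uchi translation (Vardi--Wolper, as implemented by LTL3BA), which is exactly the composition you spell out. The only remark is that your appeal to Lemma~\ref{lem:negation-invariance} for positivity of the transition guards is not actually needed for this lemma itself --- the paper invokes that assumption only afterwards, when the guards must be realized as location dynamics in Algorithm~\ref{alg:Hphi} and in the proof of Theorem~\ref{teo:Hphi-correctness}.
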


%\IncMargin{1ex}
\begin{algorithm}[t!!!]
% \SetAlgoLined
 \KwIn{$\autA_{\gamma(\varphi)} = \langle Q, q_0, \delta, F \rangle$}
 \KwOut{$\autH_\varphi = \langle\Loc,X,A\cup\{T\},\Edg,\Dyn,\Rst,$ $\Init,\cvF\rangle$}
 \BlankLine

 $\Loc = \emptyset$, $\Edg = \emptyset$\;
 $\cvL = \emptyset$\;
 \ForEach{transition $(q_0, \beta, q) \in \delta$} {%
		\If{$\beta \to \bigwedge_{i=0}^{n-1}\neg b_i$} {
		  $C = \{f \in \fcs \mid \beta \to f\}$\;
			add $(q,C)$ to $\Loc$\;
			add $(q,C)$ to $\Init$\;
			set $\Dyn(q,C) = C$\;
			add $(q,C)$ to $\cvL$\;
		}
	}
	\While{the queue $\cvL$ is not empty} {%
		extract an element $(q, C)$ from $\cvL$\;
		\ForEach{transition $(q, \beta, q') \in \delta$}{
		  $C' = \{f \in \fcs \mid \beta \to f\}$\;
			\If{$(q',C') \not\in \Loc$}{%
				add $(q',C')$ to $\Loc$\;
				set $\Dyn(q',C') = C'$\;
				add $(q',C')$ to $\cvL$\;
			}
			\ForEach{$a \in A \cup \{T\}$}{
				\If{$\beta \to \bb(a)$}{
					add transition $(q,C,a,q',C')$ to $\Edg$\;
					set $\Rst(q,C,a,q',C') = \top$\;
				}
			}
		}
	}
	$\cvF = \{(q,C) \in \Loc \mid q \in F\}$\;
%\smallskip
\caption{how to build the BHA equivalent to $\varphi$}\label{alg:Hphi}
\end{algorithm}

\subsection{From $\autA_{\gamma(\varphi)}$ to $\autH_\varphi$}

By Lemma~\ref{lem:hyltl-to-ba}, we have that the language of $\autA_{\gamma(\varphi)}$ \emph{contains} all the discrete sequences $\Sigma(\alpha)$ such that $\alpha$ satisfies $\varphi$. However, $\autA_{\gamma(\varphi)}$ may accepts also ``spurious'' discrete sequences that do not represent a hybrid trace (for instance, sequences where flow constraints are contradictory). Algorithm~\ref{alg:Hphi} accepts as input the discrete automaton $\autA_{\gamma(\varphi)}$ and build a BHA $\autH_\varphi$ that accepts only the hybrid traces satisfying $\varphi$.

The following theorem proves that the algorithm is correct.

\begin{theorem}\label{teo:Hphi-correctness}
Let $\varphi$ be a formula of \hyltlp, and let  $\autH_\varphi$ be the BHA built by Algorithm~\ref{alg:Hphi}. For every hybrid trace $\alpha$, we have that $\autH_\varphi$ accepts $\alpha$ if and only if $\alpha, 1 \mmodels \varphi$.
\end{theorem}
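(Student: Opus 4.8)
The plan is to prove the biconditional by combining the correctness of the \ltl-to-B\"uchi translation (Lemma~\ref{lem:hyltl-to-ba}) with an analysis of what Algorithm~\ref{alg:Hphi} does to the automaton $\autA_{\gamma(\varphi)}$. The key observation is that $\autH_\varphi$ is essentially $\autA_{\gamma(\varphi)}$ with its states ``refined'' by the set $C$ of flow constraints that must hold, so that each location $(q,C)$ carries both the discrete control state $q$ and the dynamics $\Dyn(q,C)=C$ forced along the incoming transition. By Lemma~\ref{lem:hyltl-to-ba}, $\alpha,1 \mmodels \varphi$ iff $\Sigma(\alpha)$ is accepted by $\autA_{\gamma(\varphi)}$, so it suffices to show that $\autH_\varphi$ accepts $\alpha$ precisely when $\autA_{\gamma(\varphi)}$ accepts $\Sigma(\alpha)$.

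First I would establish the forward direction ($\autH_\varphi$ accepts $\alpha$ $\Rightarrow$ $\alpha,1\mmodels\varphi$). Given an accepting run $(\ell_1,\bx_1)\trans{\tau_1}\cdots$ of $\autH_\varphi$ on $\alpha$ with $\ell_i=(q_i,C_i)$, I would extract the underlying sequence $q_1 q_2 \ldots$ and argue that it is an accepting run of $\autA_{\gamma(\varphi)}$ on $\Sigma(\alpha)$. The initialization loop and the main loop of the algorithm add an edge $(q,C,a,q',C')$ to $\Edg$ exactly when there is a transition $(q,\beta,q')\in\delta$ with $\beta\to\bb(a)$, and it sets $C'=\{f\in\fcs\mid \beta\to f\}$. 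Since the continuous transition relation requires $\tau_i\cmodels\Dyn(\ell_i)=C_i$, every constraint in $C_i$ is satisfied by $\tau_i$, hence $C_i\subseteq\sigma_i$; together with the edge condition $\beta\to\bb(a_i)$ this shows $\sigma_i$ satisfies the guard $\beta_i$, so the discrete transition $(q_i,\beta_i,q_{i+1})$ can fire on reading $\sigma_i$. The B\"uchi condition transfers because $\cvF=\{(q,C)\mid q\in F\}$, so some $q_f\in F$ recurs. The initialization branch guards on $\beta\to\bigwedge_i\neg b_i$, matching the leading conjunct $\bigwedge_i\neg b_i$ of $\gamma(\varphi)$ from Table~\ref{tab:gammatrans}, which fixes the first symbol.

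For the converse, given an accepting run $q_1 q_2 \ldots$ of $\autA_{\gamma(\varphi)}$ on $\Sigma(\alpha)$ firing transitions $(q_i,\beta_i,q_{i+1})$, I would construct the matching run of $\autH_\varphi$ by setting $\ell_i=(q_i,C_i)$ with $C_i=\{f\in\fcs\mid\beta_{i-1}\to f\}$ (and analogously for the first location via the initial transition). Here I must check the continuous transition is admissible, i.e. $\tau_i\cmodels C_i$: since the run fires $\beta_{i-1}$ on reading $\sigma_{i-1}$, wait---one must be careful with the indexing so that $C_i$ is determined by the transition \emph{entering} $(q_i,C_i)$ and matched against $\tau_i$; the algorithm's bookkeeping (computing $C'$ from the outgoing $\beta$ and storing it in the target location's $\Dyn$) is exactly what aligns these. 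The crucial point is that $\beta_{i}\to f$ forces $f\in\sigma_i$, whence $\tau_i\cmodels f$ by the definition of $\Sigma(\alpha)$, so $\tau_i\cmodels C_i$ holds and the state is admissible.

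The main obstacle I anticipate is the subtlety already flagged in the motivating discussion: $\autA_{\gamma(\varphi)}$ may accept ``spurious'' sequences in which the flow constraints selected along a run are mutually contradictory or not realizable by any actual trajectory, so the two languages are \emph{not} literally equal at the level of abstract symbol sequences. The resolution is that $\autH_\varphi$ only admits a run when a genuine trajectory $\tau_i$ witnessing $\tau_i\cmodels C_i$ exists, so the continuous transition relation acts as a built-in realizability filter; and because flow constraints appear only positively (Lemma~\ref{lem:negation-invariance}), setting $C_i$ to exactly the forced constraints $\{f\mid\beta_{i-1}\to f\}$ loses no behavior---any larger satisfiable set would still be consistent with $\Sigma(\alpha)$. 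I would make this precise by invoking Lemma~\ref{lem:negation-invariance} to argue that if $\Sigma(\alpha)$ is accepted then the specific refined run using the minimal forced constraints is admissible for the actual $\alpha$, closing the gap between the discrete language and the hybrid traces.
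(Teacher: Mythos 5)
Your proposal is correct and follows essentially the same route as the paper: both reduce the theorem to Lemma~\ref{lem:hyltl-to-ba} by converting accepting runs of $\autH_\varphi$ on $\alpha$ into accepting runs of $\autA_{\gamma(\varphi)}$ on $\Sigma(\alpha)$ and vice versa, using exactly the correspondence $\ell_i = (q_i,C_i)$ with $C_i=\{f\in\fcs\mid\beta_i\to f\}$ and the observation that the requirement $\tau_i\cmodels\Dyn(q_i,C_i)$ is what ties the trajectories to the guards. The only difference is presentational: where you argue directly that the discrete run fires on $\Sigma(\alpha)$ itself (using that flow constraints occur only positively in the transition labels), the paper first exhibits an accepting run on an auxiliary ``minimal'' discrete sequence whose $i$-th element is $C_i$ together with the encoding of $a_{i-1}$, and then transfers acceptance to $\Sigma(\alpha)$ via the monotonicity property of Lemma~\ref{lem:negation-invariance} --- the same fact you invoke, applied at the level of whole sequences rather than transition by transition.
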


\begin{proof}
Let $\alpha = \tau_1 a_1 \tau_2 a_2 \dots$ be a hybrid trace such that $\alpha, 1 \mmodels \varphi$. By Lemma~\ref{lem:hyltl-to-ba}, we have that $\autA_{\gamma(\varphi)}$ accepts the discrete sequence $\Sigma(\alpha)$. Let $
q_0 \trans{\beta_1} q_1 \trans{\beta_2} q_2 \trans{\beta_2} \ldots$ be an accepting run of $\autA_{\gamma(\varphi)}$ over $\Sigma(\alpha)$. For every $i \geq 1$, let $C_i = \{f \in \fcs \mid \beta_i \to f\}$, and consider the sequence $(q_1, C_1), (q_2, C_2), (q_3, C_3) \ldots$. By Algorithm~\ref{alg:Hphi} we have that: 
\begin{compactenum}
\item every pair $(q_i, C_i)$ of the sequence is a location of $\autH_\varphi$;
\item $(q_1, C_1) \in \Init$;
\item every set of flow constraints $C_i$ is such that $\Dyn(q_i, C_i) = C_i$;
\item the transition $(q_i, C_i, a_i, q_{i+1}, C_{i+1}) \in \Edg$ with reset condition $\top$.
\end{compactenum}

\noindent By definition of $\Sigma(\alpha)$ we have that $\tau_i \cmodels C_i$, and thus we can conclude that for every $i \geq 1$, both \linebreak $(q_i,C_i,\tau_i.\fstate) \trans{\tau_i} (q_{i}, C_{i}, \tau_i.\lstate)$ and $(q_i,C_i,\tau_i.\lstate) \trans{a_i} (q_{i+1}, C_{i+1}, \tau_{i+1}.\fstate)$ are valid transitions of $\autH_\varphi$. This means that $\alpha$ is generated by $\autH_\varphi$. Since $\Sigma(\alpha)$ is accepted by the discrete automaton $\autA_{\gamma(\varphi)}$ is possible to find a location $(q_f, C_f) \in \cvF$ that occurs infinitely often in the sequence. This proves that $\alpha$ is accepted by $\autH_\varphi$.

To conclude the proof, consider a hybrid trace $\alpha = \tau_1 a_1 \tau_2 a_2 \dots$ that is accepted by $\autH_\varphi$, and let $\Sigma(\alpha) = \sigma_1 \sigma_2 \ldots$ be the corresponding discrete sequence. By the semantics of BHA, it is possible to find an accepting sequence of locations $(q_1, C_1), (q_2, C_2), (q_3, C_3) \ldots$ such that $(q_i,C_i,\tau_i.\fstate) \trans{\tau_i} (q_{i}, C_{i}, \tau_i.\lstate)$ and $(q_i,C_i,\tau_i.\lstate) \trans{a_i} (q_{i+1}, C_{i+1}, \tau_{i+1}.\fstate)$ for every $i \geq 1$. 
By Algorithm~\ref{alg:Hphi} we have that there exists an accepting run $q_0 \trans{\rho_1} q_1 \trans{\rho_2} q_2 \trans{\rho_3} \ldots$ of the discrete automaton $\autA_{\gamma(\varphi)}$ over the discrete sequence $\Rho = \rho_1 \rho_2 \ldots$ where $\rho_i = C_i \cup \{b_j \in B \mid b_j$ holds true in $\bb(a_{i-1})\}$ for every $i \geq 1$.
Since every location $(q_i, C_i)$ is such that $\Dyn(q_i, C_i) = C_i$ we have that for every $f \in C_i$, $\tau_i \cmodels f$ and thus that $\rho_i \subseteq \sigma_i$. From Lemma~\ref{lem:negation-invariance} we can conclude that, since $\autA_{\gamma(\varphi)}$ accepts $\Rho$ then $\autA_{\gamma(\varphi)}$ accepts also $\Sigma(\alpha)$.
By Lemma~\ref{lem:hyltl-to-ba} we can conclude that $\alpha, 1 \mmodels \varphi$.
\end{proof}

\section{The improved algorithm at work}

In~\cite{has2013} feasibility of the automaton-based model checking approach has been tested by verifying the well-known Thermostat example against the \hyltl formula $\varphi_\mathit{hyb} = \neg \F \left( x \geq 21 \land \X \on\right)$ corresponding to the property that \quote{it is not possible that the heater turns on when the temperature is above 21 degrees}. 

To verify the example it is necessary to build the automaton for $\neg\varphi_\mathit{hyb} = \F \left( x \geq 21\land \X\on \right) = \top \U \left( x \geq 21 \land \X \on \right)$. 
The original declarative construction builds a BHA with 18 locations. In this section we will apply the new algorithm to the formula and we will show that the resulting BHA is much smaller that the previous one.
Notice that the formula $\neg\varphi_\mathit{hyb}$ is a formula where flow constraints appears only in positive form. Hence, it is not necessary to apply the translation $\pi$ of Table~\ref{tab:pitrans} to obtain a formula of \hyltlp. 
The first step of the translation algorithm is thus the application of function $\gamma$ (Table~\ref{tab:gammatrans}) to obtain the following formula of discrete \ltl:
\begin{equation*}
\gamma(\neg\varphi_\mathit{hyb}) = \neg b_0 \wedge \neg b_1 \wedge \top \dU \left( x \geq 21 \land \dX (b_0 \land \neg b_1) \right),
\end{equation*}

\noindent where we assume that $\bb(on) = b_0 \land \neg b_1$. By using the tool LTL3BA we obtain the B\"uchi automaton $\autA_{\gamma(\neg\varphi_\mathit{hyb})}$ depicted in Figure~\ref{fig:ba-neg}. Then, by applying Algorithm~\ref{alg:Hphi} we can build the BHA depicted in Figure~\ref{fig:bha-neg}. In both pictures initial states/locations are identified by a bullet-arrow while the final states/locations have a double border. The final BHA obtained by the new construction algorithm is made of only 3 location, with a great improvement over the original declarative construction.

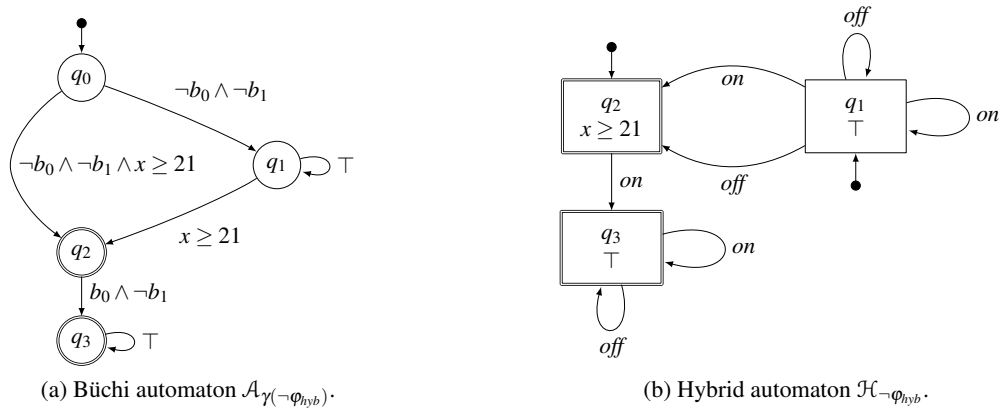
\begin{figure}[tbp]
\ \hfill
\subfloat[B\"uchi automaton $\autA_{\gamma(\neg\varphi_\mathit{hyb})}$.]
	{\scalebox{0.8}{\label{fig:ba-neg}%!TEX root =  gandalf2013.tex
\begin{tikzpicture}[>=latex,line join=bevel,scale=0.4,auto]
\node (S3) at (316bp,232bp) [draw,circle] {$q_1$}
 			edge [loop right] node {$\top$} ();
 \node (S2) at (86bp,129bp) [draw,circle,double] {$q_2$};
  \node (S0) at (86bp,25bp) [draw,circle,double] {$q_3$}
  			edge [loop right] node {$\top$} ();
  \node (T0) at (86bp,335bp) [draw,circle] {$q_0$};
	\draw [<-*] (T0) -- ++(0bp,70bp);
  \draw [->] (S3) ..controls (290bp,215bp) and (284bp,211bp)  .. (278bp,208bp) .. controls (224bp,179bp) and (158bp,154bp)  .. node {$x \geq 21$} (S2);
  \draw [->] (S2) ..controls (86bp,91bp) and (86bp,75bp)  .. node {$b_0 \land \neg b_1$} (S0);
  \draw [->] (T0) ..controls (49bp,310bp) and (20bp,285bp)  .. (8bp,256bp) .. controls (0bp,236bp) and (0bp,227bp)  .. node {$\neg b_0 \land \neg b_1 \land x \geq 21$}(8bp,208bp) .. controls (18bp,184bp) and (40bp,162bp)  ..  (S2);
  \draw [->] (T0) ..controls (145bp,314bp) and (219bp,286bp)  .. node {$\neg b_0 \land \neg b_1$} (278bp,256bp) .. controls (281bp,254bp) and (284bp,253bp)  ..  (S3);
\end{tikzpicture}}}
\hfill\hfill
\subfloat[Hybrid automaton $\autH_{\neg\varphi_\mathit{hyb}}$.]
	{\scalebox{0.8}{\label{fig:bha-neg}%!TEX root =  gandalf2013.tex
\begin{tikzpicture}[>=latex,line join=bevel,scale=0.5,auto]
\node (S3) at (316bp,129bp) [draw,rectangle] 
			{$\begin{array}{c} q_1\\ \quad\top\quad\end{array}$}
			edge [loop right] node {$\on$} ()
			edge [loop above] node {$\off$} ();
  \node (S2) at (86bp,129bp) [draw,rectangle,double] 
  			{$\begin{array}{c} q_2 \\ x \geq 21 \end{array}$};
  \node (S0) at (86bp,5bp) [draw,rectangle,double] 
  			{$\begin{array}{c} q_3\\ \quad\top\quad\end{array}$}
			edge [loop right] node {$\on$} ()
			edge [loop below] node {$\off$} ();
%  \node (T0) at (86bp,335bp) [draw,diamond] {$q_0$};
  \draw [->] (S3) to[bend right] node {$\on$} (S2) ;
  \draw [->] (S3) to[bend left] node {$\off$} (S2) ;
  \draw [->] (S2) to node {$\on$} (S0);
	\draw [<-*] (S2) -- ++(0bp,70bp);
	\draw [<-*] (S3) -- ++(0bp,-70bp);
%  \draw [->] (T0) ..controls (49bp,310bp) and (20bp,285bp)  .. (8bp,256bp) .. controls (0bp,236bp) and (0bp,227bp)  .. node[left] {$\neg b_0 \land \neg b_1 \land x \geq 21$}(8bp,208bp) .. controls (18bp,184bp) and (40bp,162bp)  ..  (S2);
%  \draw [->] (T0) ..controls (145bp,314bp) and (219bp,286bp)  .. node {$\neg b_0 \land \neg b_1$} (278bp,256bp) .. controls (281bp,254bp) and (284bp,253bp)  ..  (S3);
%
\end{tikzpicture}}}
\hfill\ 
\caption{The discrete and hybrid automata for $\neg\varphi_\mathit{hyb}$.}
\label{fig:all-neg}
\end{figure}

\medskip

As a second example, consider the globally-eventually formula $\varphi_\mathit{liv} = \G \left(\neg x \geq 18 \to \X\F\on\right)$ expressing the liveness property to \quote{eventually switch the heater $\on$ if the temperature falls below $18$ degrees}. In this case the negation of the property is the formula $\neg\varphi_\mathit{liv} = \F \left(\neg x \geq 18 \land \X\G\neg\on\right) = \top \U  \left(\neg x \geq 18 \land \X(\bot\R\neg\on)\right)$, that do not belongs to the language of \hyltlp. Hence, it is necessary to apply the translation function $\pi$ to obtain the following equivalent formula:
\begin{align*}
\overline{\varphi}_\mathit{liv} & = 
			\pi\Big(\top \U  \big(\neg x \geq 18 \land \X(\bot\R\neg\on)\big)\Big) %\\
= 	(T \lor \top) \U \Big(\neg T \land \pi\big(\neg x \geq 18 \land \X(\bot\R\neg\on)\big)\Big) \\
& =	\top \U \Big(\neg T \land \pi(\neg x \geq 18) \land \pi\big(\X(\bot\R\neg\on)\big)\Big) \\ 
& = 	\top \U \bigg(\neg T \land \Big(x < 18 \lor \X\big(T \U (T \land x < 18)\big)\Big) \land 
		\X\Big(T\U\big(\neg T \land \pi(\bot\R\neg\on)\big)\Big)\bigg) \\
& = 	\top \U \bigg(\neg T \land \Big(x < 18 \lor \X\big(T \U (T \land x < 18)\big)\Big) \land 
		\X\Big(T\U\big(\neg T \land \bot\R\left(T \lor \neg\on\right)\big)\Big)\bigg)
\end{align*}

\noindent The input formula for LTL3BA is thus
\begin{multline*}
\gamma(\overline{\varphi}_{safe}) = 
	\neg b_0 \land \neg b_1 \land \top \dU \Bigg(\neg (b_0 \land b_1) \land \Big(x < 18 \lor \dX\big(( b_0 \land  b_1) \dU (b_0 \land  b_1 \land x < 18)\big)\Big) 
\\[-10pt]
	\land \dX\bigg((b_0\land b_1) \dU\Big(\neg(b_0\land b_1) \land \bot\dR\big(( b_0 \land  b_1) \lor \neg(b_0 \land \neg b_1)\big)\Big)\bigg)\Bigg)
\end{multline*}

\noindent while the resulting discrete B\"uchi automaton is depicted in Figure~\ref{fig:ba-liv}.
Algorithm~\ref{alg:Hphi} transforms it into the BHA with 5 locations shown in Figure~\ref{fig:bha-liv}.
Notice that, despite the increased complexity of the formula due to the translation into \hyltlp the final result is still of very small size.

\begin{figure}[t!!!]
\centering
\subfloat[B\"uchi automaton $\autA_{\gamma(\overline{\varphi}_\mathit{liv})}$.]
	{\scalebox{0.8}{\label{fig:ba-liv}%!TEX root =  gandalf2013.tex
\begin{tikzpicture}[>=latex,line join=bevel,scale=0.5,auto]
%% nodes
  \node (Q0) at (300bp,350bp) [draw,circle] {$q_0$};
		\draw [<-*] (Q0) -- ++(0bp,70bp);
	\node (Q1) at (100bp,240bp) [draw,circle] {$q_1$}
		edge [loop left] node {$\top$} ();
  \node (Q2) at (300bp,240bp) [draw,circle] {$q_2$}
		edge [loop right] node {$b_0 \land b_1$} ();
  \node (Q3) at (300bp,130bp) [draw,circle] {$q_3$}
		edge [loop left] node {$b_0 \land b_1$} (); 
  \node (Q4) at (300bp,20bp) [draw,double,circle] {$q_4$}
		edge [loop right] node {$\begin{array}{c}\neg b_0, \\ b_0 \land b_1\end{array}$} (); 
%% edges  
  \draw [->] (Q0) to[out=0,in=90] node[above=11bp] {$\neg b_0 \land \neg b_1 \land x < 18$} (500bp,240bp) 
  		to[out=-90,in=0] (Q3);
  \draw [->] (Q2) -- node {$b_0 \land b_1 \land x < 18$} (Q3);
  \draw [->] (Q0) -- node {$\neg b_0 \land \neg b_1$} (Q2);
  \draw [->] (Q1) -- node[left=2bp] {$\begin{array}{l}
  		\neg b_0 \land x < 18, \\ b_0 \land \neg b_1 \land x < 18\end{array}$} (Q3);
  \draw [->] (Q3) -- node {$\neg b_0$} (Q4);
  \draw [->] (Q0) to[out=180,in=90] node[above=6bp] {$\neg b_0 \land \neg b_1$} (Q1);
  \draw [->] (Q1) -- node {$\begin{array}{c}\neg b_0,\\ b_0 \land \neg b_1\end{array}$} (Q2);
\end{tikzpicture}}}\quad
\subfloat[Hybrid automaton $\autH_{\overline{\varphi}_\mathit{liv}}$.]
	{\scalebox{0.8}{\label{fig:bha-liv}%!TEX root =  gandalf2013.tex
\begin{tikzpicture}[>=latex,line join=bevel,scale=0.5,auto]
%% nodes
	\node (Q1) at (200bp,180bp) [draw,rectangle]
		{$\begin{array}{c} q_1\\ \quad\top\quad\end{array}$}
		edge [loop left] node {$T, \on, \off$} ();
		\draw [<-*] (Q1) -- ++(0bp,70bp);
  \node (Q2) at (400bp,180bp) [draw,rectangle] 
  		{$\begin{array}{c} q_2\\ \quad\top\quad\end{array}$}
		edge [loop right] node {$T$} ();
		\draw [<-*] (Q2) -- ++(0bp,70bp);
  \node (Q3) at (200bp,30bp) [draw,rectangle] 
  		{$\begin{array}{c} q_3^1\\  x < 18\end{array}$};
		\draw [<-*] (Q3) -- ++(-80bp,0bp);
  \node (Q3b) at (400bp,30bp) [draw,rectangle] 
  		{$\begin{array}{c} q_3^2\\ \quad\top\quad\end{array}$}
  		edge [loop right] node {$T$} ();
  \node (Q4) at (400bp,-120bp) [draw,rectangle,double] 
  		{$\begin{array}{c} q_4\\ \quad\top\quad\end{array}$}
  		edge [loop right] node {$T,\off$} ();
%% edges
  \draw [->] (Q2) -- node[above left] {$T$} (Q3)  ;
  \draw [->] (Q1) -- node[left] {$\on,\off$} (Q3);
  \draw [->] (Q1) -- node {$\on,\off$} (Q2);
  \draw [->] (Q3) -- node {$T$} (Q3b);
  \draw [->] (Q3b) -- node {$\off$} (Q4);
\end{tikzpicture}}}
\caption{The discrete and hybrid automata for $\neg\varphi_\mathit{liv}$.}
\label{fig:all-safe}
\end{figure}
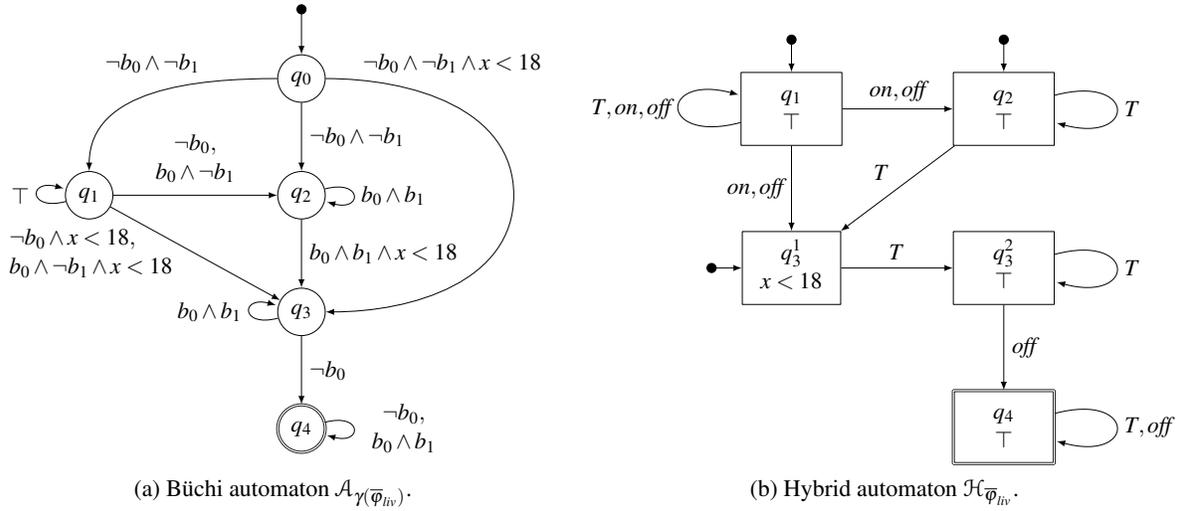

We have verified that the thermostat example given in~\cite{has2013} respects the two example properties  $\varphi_\mathit{hyb}$ and $\varphi_\mathit{liv}$ using the software package PhaVer~\cite{Frehse2008}. Since the system and the automata for the properties are very simple, the computation time was almost instantaneous: less than $0.1 s$ for both formulas on an Intel Core 2 Duo 2.4 GHz iMac with 4 Gb of RAM. 

\section{Conclusion}

In this paper we extended the current research on \hyltl, a logic that is able to express properties of hybrid traces, and that can be used to verify hybrid systems.
We identified the fragment of \hyltl that can be transformed into hybrid automata, that is, the positive flow constraints fragment \hyltlp. Then, we have shown that every property definable in the full language is also definable by \hyltlp. Finally, we developed a new algorithm to translate formulas into hybrid automata, that turned out to be much more efficient than the original declarative algorithm.

This work can be extended in many directions. The expressivity of the logic can be extended by adding jump predicates to the language, to express properties on the reset functions of the system. 
A comprehensive tool support for the logic is currently missing: an implementation of the complete model checking algorithm into the software package Ariadne~\cite{ijrnc2012} is under development.

\bibliographystyle{eptcs}
\bibliography{gandalf2013}
\end{document}